\documentclass[letter,romanappendices,onecolumn,draft]{ieeeconf}
\let\proof\relax   

\usepackage{amsthm,xpatch}
\usepackage{amsmath,amsfonts}
\usepackage{cite}
\usepackage{amssymb}
\usepackage{dsfont}
\usepackage{graphicx, subfigure}
\usepackage{color}
\usepackage{breqn}
\usepackage{mathtools}
\usepackage{bbm}
\usepackage{latexsym}
\usepackage[ruled, linesnumbered]{algorithm2e}
\usepackage{accents}
\usepackage{tikz}
\usepackage[mathscr]{euscript}
\usepackage{bm}
\usepackage{comment}

% txupr
%\DeclareMathAlphabet{\mathfrak}{U}{jkpmia}{m}{it}

%\usepackage[frak = mma,frakscaled = 0.97]{mathalpha}

\setcounter{MaxMatrixCols}{20}

\newtheorem{lemma}{Lemma}
\newtheorem{theorem}{Theorem}
\newtheorem{remark}{Remark}

\newtheorem{example}{Example}

\newcommand*{\transpose}{%
  {\mathpalette\@transpose{}}%
}

\IEEEoverridecommandlockouts

\begin{document}

\newcommand{\SB}[3]{
\sum_{#2 \in #1}\biggl|\overline{X}_{#2}\biggr| #3
\biggl|\bigcap_{#2 \notin #1}\overline{X}_{#2}\biggr|
}

\newcommand{\Mod}[1]{\ (\textup{mod}\ #1)}

\newcommand{\overbar}[1]{\mkern 0mu\overline{\mkern-0mu#1\mkern-8.5mu}\mkern 6mu}

\makeatletter
\newcommand*\nss[3]{%
  \begingroup
  \setbox0\hbox{$\m@th\scriptstyle\cramped{#2}$}%
  \setbox2\hbox{$\m@th\scriptstyle#3$}%
  \dimen@=\fontdimen8\textfont3
  \multiply\dimen@ by 4             % 4x the default rule thickness
  \advance \dimen@ by \ht0
  \advance \dimen@ by -\fontdimen17\textfont2
  \@tempdima=\fontdimen5\textfont2  % x-height
  \multiply\@tempdima by 4
  \divide  \@tempdima by 5          % 80% of the x-height
  % Modifications are only necessary if the top of the subscript is not that high:
  \ifdim\dimen@<\@tempdima
    \ht0=0pt                        % don't let the subscript interfere
    \@tempdima=\fontdimen5\textfont2
    \divide\@tempdima by 4          % 25% of the x-height
    \advance \dimen@ by -\@tempdima % if >0, add to depth of superscript!
    \ifdim\dimen@>0pt
      \@tempdima=\dp2
      \advance\@tempdima by \dimen@
      \dp2=\@tempdima
    \fi
  \fi
  #1_{\box0}^{\box2}%
  \endgroup
  }
\makeatother

\makeatletter
\renewenvironment{proof}[1][\proofname]{\par
  \pushQED{\qed}%
  \normalfont \topsep6\p@\@plus6\p@\relax
  \trivlist
  \item[\hskip\labelsep
        \itshape
%    #1\@addpunct{.}]\ignorespaces% DELETED
    #1\@addpunct{:}]\ignorespaces% ADDED
}{%
  \popQED\endtrivlist\@endpefalse
}
\makeatother

\makeatletter
\newsavebox\myboxA
\newsavebox\myboxB
\newlength\mylenA

\newcommand*\xoverline[2][0.75]{%
    \sbox{\myboxA}{$\m@th#2$}%
    \setbox\myboxB\null% Phantom box
    \ht\myboxB=\ht\myboxA%
    \dp\myboxB=\dp\myboxA%
    \wd\myboxB=#1\wd\myboxA% Scale phantom
    \sbox\myboxB{$\m@th\overline{\copy\myboxB}$}%  Overlined phantom
    \setlength\mylenA{\the\wd\myboxA}%   calc width diff
    \addtolength\mylenA{-\the\wd\myboxB}%
    \ifdim\wd\myboxB<\wd\myboxA%
       \rlap{\hskip 0.5\mylenA\usebox\myboxB}{\usebox\myboxA}%
    \else
        \hskip -0.5\mylenA\rlap{\usebox\myboxA}{\hskip 0.5\mylenA\usebox\myboxB}%
    \fi}
\makeatother

\xpatchcmd{\proof}{\hskip\labelsep}{\hskip3.75\labelsep}{}{}

\pagestyle{plain}

\title{\fontsize{19}{27}\selectfont Private Linear Transformation: The Individual Privacy Case}

\author{Nahid Esmati, Anoosheh Heidarzadeh, and Alex Sprintson\thanks{The authors are with the Department of Electrical and Computer Engineering, Texas A\&M University, College Station, TX 77843 USA (E-mail: \{nahid,anoosheh, spalex\}@tamu.edu).}}

%\thanks{This material is based upon work supported by the National Science Foundation (NSF) under Grants No.~1718658 and 1642983.}

% 

%\thanks{This work was supported by the National Science Foundation under Grant No.~CNS-0954153 and the AFOSR under Contract No.~FA9550-13-1-0008.}

\maketitle 

\thispagestyle{plain}

\begin{abstract}
%This paper considers the problem of single-server Private Linear Transformation (PLT) when individual privacy is required. In this problem, there is a user who wishes to compute $L$ linear combinations of a $D$-subset of messages belonging to a dataset of $K$ i.i.d. messages stored on a single remote server. The goal of the user is to download minimum amount of information from the server in order to perform their computation, while the identity of every message required for the computation is kept individually private from the server. In this work, we focus on the cases where the matrix of coefficients pertaining to the demanded linear combinations is the generator matrix of a Maximum Distance Separable (MDS) code. We establish non-trivial lower and upper bounds on the capacity of PLT with individual privacy, where the capacity is defined as the supremum of all achievable download rates. Moreover, these lower and upper bounds are matching when $D$ divides $K$, or $K-D\lfloor\frac{K}{D}\rfloor$ divides $D$, settling the capacity for these cases. 

%In a parallel submission, we have characterized the capacity of single-server PLT under a stronger privacy condition, called joint privacy, where the identities of all messages required for the computation must be kept private jointly. The converse proofs and the achievability schemes in these two works rely on different information-theoretic arguments and coding schemes. 

This paper considers the single-server Private Linear Transformation (PLT) problem when individual privacy is required. In this problem, there is a user that wishes to obtain $L$ linear combinations of a $D$-subset of messages belonging to a dataset of $K$ messages stored on a single server. The goal is to minimize the download cost while keeping the identity of every message required for the computation individually private. The individual privacy requirement implies that, from the perspective of the server, every message is equally likely to belong to the $D$-subset of messages that constitute the support set of the required linear combinations. We focus on the setting in which the matrix of coefficients pertaining to the required linear combinations is the generator matrix of a Maximum Distance Separable code. We establish lower and upper bounds on the capacity of PLT with individual privacy, where the capacity is defined as the supremum of all achievable download rates. We show that our bounds are tight under certain divisibility conditions. In addition, we present lower bounds on the capacity of the settings in which the user has a prior side information about a subset of messages. 
\end{abstract}

\section{introduction}

%\begin{comment}
In this work, %we consider the problem of private computation of multiple linear combinations which we refer to as 
we study the problem of single-server 
\emph{Private Linear Transformation (PLT) with Individual Privacy}, referred to as \emph{IPLT} for short. In this problem, %The PLT problem has two main components: (i) 
there is a single server that stores a set of $K$ messages, and a user that wants to compute $L$ linear combinations of a subset of $D$ messages. The objective of the user is to recover the required linear combinations by downloading minimum possible amount of information from the server, while 
protecting the privacy of the identity of every message required for the computation individually.
The individual privacy requirement implies that, from the server's perspective, every message is equally likely to belong to the $D$-subset of messages that constitute the support set of the required linear combinations.%More specifically, the individual privacy requirement implies that each message must be equally likely to belong to the set of $D$ messages that constitute the support set of the required linear combinations. %, while the total amount of information being downloaded from the server is minimized.

The PLT problem is motivated by the application of \emph{linear transformation} for dimensionality reduction in Machine Learning (ML), see, e.g.,~\cite{CG2015}. %(for details, see~\cite{EHS2021Joint}). 
%This setup appears in several practical scenarios such as random linear transformation for dimensionality reduction in machine learning (see, e.g.,~\cite{CG2015} and references therein). 
Consider a dataset with $N$ data samples, each with $K$ attributes. 
Consider a user that wishes to implement an ML  algorithm on a subset of $D$ selected attributes, while protecting the privacy of the selected attributes.  
When $D$ is large, the $D$-dimensional feature space is typically mapped onto a new subspace of lower dimension, say, $L$, and the ML algorithm operates on the new $L$-dimensional subspace instead. 
A commonly-used technique for dimensionality reduction is \emph{linear transformation}, where an $L\times D$ matrix is multiplied by the $D\times N$ data submatrix (the submatrix of the original $K\times N$ data matrix restricted to the $D$ selected attributes). 
Thinking of the rows of the $K\times N$ data matrix as the $K$ messages, the labels of the $D$ selected attributes as the identities of the $D$ messages in the support set of the required linear combinations, and the $L\times D$ matrix used for transformation as the coefficient matrix of the required linear combinations, this scenario matches the setup of the PLT problem. In addition, when the privacy requirement is to hide the identity of each of the $D$ selected attributes individually, the problem reduces to PLT with individual privacy. 

%In addition, when the user wants to hide the identity of each of the $D$ selected attributed from the server that stores the dataset, the problem  % defined earlier. 

Individual privacy was originally introduced in~\cite{HKRS2019} for single-server Private Information Retrieval (PIR) with Individual Privacy (IPIR), and was recently considered for single-server Private Linear Computation (PLC) with Individual Privacy (IPLC) in~\cite{HS2020}. 
The IPLT problem generalizes the IPIR and IPLC problems. In particular, IPLT reduces to IPIR or IPLC for $L=D$ or $L=1$, respectively. 
The IPLT problem is also related to the single-server PLT with Joint Privacy (JPLT) problem~%, which we have studied in a parallel submission~
\cite{EHS2021Joint}. In JPLT, %, we have studied a closely related problem to IPLT, referred to as single-server PLT with Joint Privacy (or JPLT for short), 
the identities of all $D$ messages required for the computation must be kept jointly private. % from the server.
The joint privacy requirement implies that, from the perspective of the server, any $D$-subset of messages is equally likely to be the support set of the required linear combinations. %Note that the joint privacy requirement is much stronger than the individual privacy condition, which results in a higher overhead.
Joint privacy was previously considered in~\cite{BU2018,HKGRS2018,LG2018} for PIR, and  in~\cite{HS2019PC,HS2020} for PLC. 
%every $D$-subset of message indices must be equally likely to be the demand's support index set given the query. 
%It should be noted that individual privacy is weaker than joint privacy, %which was studied predominantly in the PIR and PLC literature~\cite{}. For joint privacy,
Despite the fact that joint privacy is a stronger notion of privacy, 
individual privacy may still provide a satisfactory degree of privacy %with a much lower download cost 
in several practical scenarios (see, e.g.,~\cite{KKHS32019}). 
It should be noted that both joint and individual privacy conditions are relaxed versions of the privacy condition considered in~\cite{SJ2018,MM2018,OK2018,OLRK2018} for multi-server PLC, where % neither joint nor individual privacy requires 
the values of the combination coefficients in the required linear combination must be kept private. %from the servers. % must be kept private. % from the server. %This relaxation enables us to 
%Note, however, that both joint and individual privacy conditions are weaker than the privacy requirement in~\cite{SJ2018,MM2018,OK2018,OLRK2018} for multi-server PLC, where % neither joint nor individual privacy requires 
%the privacy of the combination coefficients in the required linear combination must be protected from the servers. % must be kept private. % from the server. %This relaxation enables us to 

%The IPLT problem is closely related to the problems of single-server PIR with Individual Privacy (IPIR)~\cite{HKRS2019} and single-server PLC with Individual Privacy (IPLC)~\cite{HS2020}. In particular, IPLT reduces to IPIR or IPLC for $L=D$ or $L=1$, respectively. 
In~\cite{HKRS2019} and~\cite{HS2020}, IPIR and IPLC were studied in the settings in which the user has a prior side information about a subset of messages. It was shown that, when compared to PIR and PLC with joint privacy, both IPIR and IPLC can be performed with a much lower download cost. % than JPIR (or JPLC) with Joint Privacy and PLC with joint privacy individual privacy can be achieved with a much lower download cost. 
Motivated by these results, this work seeks to answer the following questions: 
(i) is it possible to perform IPLT with a lower download cost than JPLT? 
(ii) is it possible to leverage a prior side information in order to further decrease the download cost of IPLT?
(iii) what are the fundamental limits on the download cost for IPLT? 
We make a significant progress towards answering these questions in this work.

\subsection{Main Contributions}
%We measure the efficiency of a JPLT or IPLT protocol by its \emph{rate}---defined as the ratio of the entropy of the demand to the entropy of the answer. We define the \emph{capacity} of the JPLT or IPLT setting as the supremum of rates over all JPLT or IPLT protocols, respectively.

%The converse proof for JPLT problem is based on
%In this work, we characterize the capacity of the single-server PLT problem for the joint privacy condition in terms of $K$, $D$, $L$, and $R$ where $R\triangleq K \pmod D$. We prove the converse bound by using information-theoretic arguments, and for the proof of achievability we design a JPLT protocol, termed \emph{Specialized MDS Code protocol}, which relies on the idea of extending an MDS code. Comparing our results to those in ~\cite{} and ~\cite{} show that even without any side information PLT protocol is more efficient (in terms of communication cost) when joint or individual privacy is required. Naturally, our results for PLT problem subsume those for single-server PIR and PLC problems (with joint privacy and no side information) in ~\cite{??} and ~\cite{??}, respectively. The capacity of JPLT setting when $L=1$ provides a proof for the optimality of the rate which was achieved in single-server PLC (no side information) with joint privacy in~\cite{?}.
%Also, in this work we show that the rate achieved by PLC protocol () in ~\cite{}

In this work, we focus on the setting in which the coefficient matrix corresponding to the required linear combinations is the generator matrix of a Maximum Distance Separable (MDS) code. 
%the assumption of an MDS coefficient matrix is 
The MDS coefficient matrices are motivated by the application of \emph{random linear transformation} for dimensionality reduction (see, e.g.,~\cite{BM2001}), where a random $L\times D$ matrix is used for transformation. 
Note that %using the Schwartz–Zippel lemma, it can be shown that 
%the larger is the size of the field $p$, the higher is the probability that every $L\times L$ submatrix of an $L\times D$ matrix whose entries are randomly chosen from $\mathbb{F}_p$ is invertible.
an $L\times D$ matrix whose entries are randomly chosen from a sufficiently large field is MDS with high probability.

For this setting, we establish lower and upper bounds on the capacity of IPLT, where the capacity is defined as the supremum of all achievable download rates. In addition, we show that our bounds are tight under certain divisibility conditions, settling the capacity of IPLT for such cases. 
To prove the upper bound on the capacity, we use  information-theoretic arguments based on a necessary condition for IPLT schemes, and formulate the problem as an integer linear programming (ILP) problem. Solving this ILP, we obtain the capacity upper bound. The lower bound on the capacity is proven by a novel achievability scheme, termed \emph{Generalized Partition-and-Code with Partial Interference Alignment (GPC-PIA) protocol}. This protocol generalizes the protocols we previously proposed in~\cite{HKRS2019} and~\cite{HS2020} for IPIR and IPLC, respectively. %It should be noted that the converse proof and the achievability scheme in this work rely on different techniques than those in~\cite{EHS2021Joint} for JPLT.    
In addition, we present lower bounds on the capacity of the settings in which the user has a prior side information about a subset of messages. 
Our results indicate that, when there is no side information, IPLT can be performed more efficiently than JPLT, in terms of the download cost. The advantage of IPLT over JPLT is even more pronounced 
when the user knows a subset of messages or a subspace spanned by them as side information. %, the reduction in the download cost is even more pronounced. 

%We establish a non-trivial upper bound on the capacity of the IPLT setting by formulating the problem as an integer linear programming based on information-theoretic arguments.
%information-theoretic arguments, and formulate the problem as an integer linear programming (ILP) problem.
%The proof is based on information-theoretic arguments and an integer linear programming (ILP) problem. 
%In order to prove the capacity lower bound, we construct an IPLT protocol, called \emph{Generalized Partition-and-Code with Partial Interference Alignment (GPC-PIA) protocol}, by generalizing the previously proposed protocols in~\cite{} and~\cite{} for single-server PIR and PLC (with individual privacy and no side information). In particular, when $L=1$ PLT reduces to PLC problme. 
%Our results provide a converse proof for the capacity of single-server PLC protocol proposed in ~\cite{?} for the case that the user has no side information.
%where the capacity of this setting was left open. 

%Interestingly, the achievability scheme that we propose for IPLT setting matches the converse bound   

%we characterize the capacity, and design an achievability scheme for single-server PLC problem with no side information for joint privacy condition. %Also, when $L=D$, the capacity of the PLT problem for joint privacy setting, becomes the same as the capacity of the PIR (without SI) with joint privacy, and the user must download the entire dataset.

% sets and sequences are represented by sans serif font (e.g., $\mathscr{K}, \mathrm{W}$)

%\end{comment}

%\cleardoublepage

\section{Problem Setup}\label{sec:SN}

%{\color{blue} In this paragraph, we talk about the motivation behind the MDS property assumption, by relating it to the random projection for blah blah blah.}

%\textcolor{blue}{Back to our example of dimensionality reduction, and saying we want to do private linear transformation, meaning that it remains unknown to the server which $D$ subset of attributes were required for the computation task.}

%\textcolor{red}{Relating the notion of privacy to the random linear transformation in the context of machine learning could be the scenario that a user wishes to keep the identity of the $D$ desired attribute (to be linearly transformed to the L-dimensional subspace) private.}

Throughout this paper, %vectors/matrices and sets/sequences are represented by roman font (e.g., $\mathrm{X}, \mathrm{V}, \mathrm{W}$), whereas collections of vectors/matrices or sets/sequences by Fraktur symbols (e.g., $\mathscr{V}, \mathscr{W}$).
random variables and their realizations are denoted by bold-face symbols (e.g., $\mathbf{X},\mathbf{W}$) and non-bold-face symbols (e.g., $\mathrm{X},\mathrm{W}$), respectively. %$H(\cdot)$ denotes the (Shannon) entropy function. 

Let $\mathbb{F}_p$ be a finite field of order $p$, and let $\mathbb{F}_{q}$ be an extension field of $\mathbb{F}_p$. %, where $q$ is a power of $p$. 
Let $K,D,L$ be positive integers such that ${L\leq D\leq K}$, and let $\mathcal{K}\triangleq \{1,...,K\}$. We denote by $\mathscr{W}$ the set of all $D$-subsets of $\mathcal{K}$, and denote by $\mathscr{V}$ the set of all $L\times D$ matrices (with entries from $\mathbb{F}_p$) that generate a Maximum Distance Separable (MDS) code. % Maximum Distance Separable (MDS) code. %, i.e., with $L$ linearly independent rows and $D$ nonzero columns. 
%For simplifying the notation, hereafter we denote $\mathscr{K}_D$ and $\mathcal{C}_D$ by $\mathscr{K}$ and $\mathcal{C}$, respectively. 

% $\mathbf{X}=[\mathbf{x}_1,\cdots,\mathbf{x}_K]^{\mathsf{T}}$

Suppose that there is a server that stores $K$ messages ${X_1,\dots,X_K}$, where $X_i\in \mathbb{F}_q$ for $i\in \mathcal{K}$. 
Let $\mathrm{X}\triangleq [X_1,\dots,X_K]^{\mathsf{T}}$.
For every ${\mathrm{S}\subset \mathcal{K}}$, we denote by $\mathrm{X}_{\mathrm{S}}$ the vector $\mathrm{X}$ restricted to its components indexed by $\mathrm{S}$. %, i.e., $\mathrm{X}_{\mathrm{S}} = [X_{i_1},\dots,X_{i_s}]^{\mathsf{T}}$, where ${\mathrm{S} = \{i_1,\dots,i_s\}}$. 
We assume that $\mathbf{X}_1,\dots,\mathbf{X}_K$ are independently and uniformly distributed over $\mathbb{F}_{q}$. That is, ${H(\mathbf{X}_{i})=\theta\triangleq \log_2 q}$ for $i\in \mathcal{K}$, and ${H(\mathbf{X}_{\mathrm{S}})= |\mathrm{S}| \theta}$ for ${\mathrm{S}\subset \mathcal{K}}$, where $|\mathrm{S}|$ denotes the size of $\mathrm{S}$. 
Note that $H(\mathbf{X})=K\theta$. 
%For every $\mathrm{T} = \{i_1,\dots,i_l\}\subset \mathscr{K}$, we denote $[x_{i_1},\dots,x_{i_l}]^{\mathsf{T}}$ by $\mathrm{X}_{\mathrm{T}}$. 
%Consider a user that has a linear combination ${Y}^{[\mathcal{S},\mathscr{U}]}\triangleq U X_{S}$ for some $S\in \mathscr{K}_M$ and some $U\in \mathcal{C}_M$. That is, ${Y}^{[S,U]}= \sum_{l=1}^{M} u_{i_l} X_{i_l}$, where $S = \{i_1,\dots,i_M\}$ and $U = [u_{i_1},\dots,u_{i_M}]$.% 
% $Z^{[\mathrm{W},\mathrm{v}_1]},\dots,Z^{[\mathrm{W},\mathrm{v}_L]}$
%Fix ${\mathrm{W}=\{i_1,\dots,i_D\}\in \mathscr{W}}$ and $\mathrm{V}= [\mathrm{v}_{1},\dots,\mathrm{v}_{L}]^{\mathsf{T}}\in \mathscr{V}$. 
Suppose that there is a user who wants to compute 
the vector $\mathrm{Z}^{[\mathrm{W},\mathrm{V}]}\triangleq \mathrm{V}\mathrm{X}_{\mathrm{W}}$, where ${\mathrm{W}\in \mathscr{W}}$ and ${\mathrm{V}\in \mathscr{V}}$. 
That is, $\mathrm{Z}^{[\mathrm{W},\mathrm{V}]}$ contains $L$ components $\mathrm{v}^{\mathsf{T}}_1 \mathrm{X}_{\mathrm{W}},\dots,\mathrm{v}^{\mathsf{T}}_L \mathrm{X}_{\mathrm{W}}$, where $\mathrm{v}_l^{\mathsf{T}}$ is the $l$th row of $\mathrm{V}$. 
%$= [\mathrm{v}_{1},\dots,\mathrm{v}_{L}]^{\mathsf{T}}$
%$L$ linear combinations $\mathrm{v}^{\mathsf{T}}_1 \mathrm{X}_{\mathrm{W}},\dots,\mathrm{v}^{\mathsf{T}}_L \mathrm{X}_{\mathrm{W}}$, collectively denoted by the vector $\mathrm{Z}^{[\mathrm{W},\mathrm{V}]}\triangleq \mathrm{V}\mathrm{X}_{\mathrm{W}}$, where ${\mathrm{W}\in \mathscr{W}}$ and ${\mathrm{V}= [\mathrm{v}_{1},\dots,\mathrm{v}_{L}]^{\mathsf{T}}\in \mathscr{V}}$. 
%of the $D$ messages $X_{i_1},\dots,X_{i_D}$ with the combination coefficient vectors $\mathrm{v}_1,\dots,\mathrm{v}_L$. 
%That is, $Z^{[\mathrm{W},\mathrm{v}_l]} = \mathrm{v}^{\mathsf{T}}_l \mathrm{X}_{\mathrm{W}}$ for $l\in \mathscr{L}$. %Denote by $Z^{[\mathrm{W},\mathrm{V}]}$ the collection of $Z^{[\mathrm{W},\mathrm{v}_1]},\dots,Z^{[\mathrm{W},\mathrm{v}_L]}$. 
Note that $H(\mathbf{Z}^{[\mathrm{W},\mathrm{V}]})=L\theta$. We refer to $\mathrm{Z}^{[\mathrm{W},\mathrm{V}]}$ as the \emph{demand}, $\mathrm{W}$ as the \emph{support index set of the demand}, $\mathrm{V}$ as the \emph{coefficient matrix of the demand}, $D$ as the \emph{support size of the demand}, and $L$ as the \emph{dimension of the demand}.

We assume that (i) $\mathbf{W}$, $\mathbf{V}$, and $\mathbf{X}$ are independent; (ii) $\mathbf{W}$ and $\mathbf{V}$ are uniformly distributed over ${\mathscr{W}}$ and $\mathscr{V}$, respectively; and (iii) the parameters $D$ and $L$, and the distribution of $(\mathbf{W},\mathbf{V})$ are initially known by the server, whereas the realization $(\mathrm{W},\mathrm{V})$ is not initially known by the server.

%\textcolor{red}{The assumption of the coefficient matrix $\mathrm{V}$ having the MDS property is motivated by several practical scenarios, e.g., the application of random linear transformation for dimensionality reduction in machine learning. 
%In particular, random linear transformation is often performed by multiplying the dataset by a randomly generated matrix. Note that when the field size $q$ is large, every $L\times L$ square submatrix of a randomly generated matrix with entries over $\mathbb{F}_q$ will be invertible with high probability; and hence will have the MDS property defined as above.}
%It is worth noting that MDS matrices are motivated by the application of random linear transformation for dimensionality reduction, where a random $L\times D$ matrix is used for transformation. 
%In particular, when the field size $p$ is sufficiently large, every $L\times L$ submatrix of a random $L\times D$ matrix with entries from $\mathbb{F}_p$ will be invertible with high probability. Henceforth, with a slight abuse of notation, we denote by $\mathscr{V}$ the set of all MDS matrices of size $L\times D$. %That said, in Section~\ref{}, we briefly discuss the case where the MDS assumption is relaxed. 

%We assume that the support size of the demand ($D$), the dimension of the demand ($L$), and the joint distribution of $\mathbf{W}$ and $\mathbf{V}$ are initially known to the server, whereas the server does not know the realizations $\mathrm{W}$ and $\mathrm{V}$ in advance.

Given the realization $(\mathrm{W},\mathrm{V})$, the user generates a query $\mathrm{Q} = \mathrm{Q}^{[\mathrm{W},\mathrm{V}]}$, which is a (potentially stochastic) function of $(\mathrm{W},\mathrm{V})$, and sends it to the server. 
The query $\mathrm{Q}$ %, and potentially a random key $\mathrm{R}$ (independent of $\mathrm{W}$, $\mathrm{V}$, and $\mathrm{X}$) that is generated by the user and is initially unknown to the server. 
%That is, $H(\mathbf{Q}|\mathbf{W},\mathbf{V},\mathbf{R})=0$, where $\mathbf{Q}^{[\mathbf{W},\mathbf{V}]}$ is denoted by $\mathbf{Q}$. 
must satisfy the following privacy condition: given the query $\mathrm{Q}$, 
every message index must be equally likely to be in the demand's support index set. 
That is, for every $i\in \mathcal{K}$, it must hold that
\begin{equation*}
\Pr (i\in \mathbf{W}|\mathbf{Q}=\mathrm{Q})=\Pr(i \in \mathbf{W}), 
\end{equation*} where $\mathbf{Q}$ denotes $\mathbf{Q}^{[\mathbf{W},\mathbf{V}]}$. 
This condition---which was recently introduced in~\cite{HKRS2019} and~\cite{HS2020} for single-server PIR and PLC, is referred to as the \emph{individual privacy condition}.

Upon receiving the query $\mathrm{Q}$, the server generates an answer $\mathrm{A}=\mathrm{A}^{[\mathrm{W},\mathrm{V}]}$, %simply denoted by $\mathrm{A}$, 
and sends it back to the user. 
The answer $\mathrm{A}$ is a deterministic function of $\mathrm{Q}$ and $\mathrm{X}$.
%That is, $H(\mathbf{A}|\mathbf{Q},\mathbf{X})=0$, where $\mathbf{A}^{[\mathbf{W},\mathbf{V}]}$ is denoted by $\mathbf{A}$. 
%The answer $\mathrm{A}$ must satisfy the \emph{recoverability condition}: %In addition to the privacy condition, there is the \emph{recoverability condition} that needs to be satisfied: 
The collection of the answer $\mathrm{A}$, the query $\mathrm{Q}$, and the realization $(\mathrm{W}, \mathrm{V})$ must enable the user to recover the demand $\mathrm{Z}^{[\mathrm{W},\mathrm{V}]}$. That is, 
${H(\mathbf{Z}| \mathbf{A},\mathbf{Q}, \mathbf{W},\mathbf{V})=0}$, where $\mathbf{Z}$ and $\mathbf{A}$ denote $\mathbf{Z}^{[\mathbf{W},\mathbf{V}]}$ and $\mathbf{A}^{[\mathbf{W},\mathbf{V}]}$, respectively. This condition is referred to as the \emph{recoverability condition}. %Note that $H(\mathbf{Z})=L\theta$.

%Note that given $(\mathbf{Q},\mathbf{X})$, $(\mathbf{W},\mathbf{V})$ and $\mathbf{A}$ are conditionally independent, that is, $H(\mathbf{A},\mathbf{W},\mathbf{V}|\mathbf{Q},\mathbf{X})=0$. \textcolor{red}{Also note that $H(\mathbf{A}| \mathbf{Q},\mathbf{X}) = 0$.}
 
We would like to design a protocol for generating a query $\mathrm{Q}^{[\mathrm{W},\mathrm{V}]}$ and the corresponding answer $\mathrm{A}^{[\mathrm{W},\mathrm{V}]}$ % for any given $(\mathrm{W},\mathrm{V})$, 
such that both the individual privacy and recoverability conditions are satisfied.
We refer to this problem as single-server \emph{Private Linear Transformation (PLT) with Individual Privacy}, or \emph{IPLT} for short. %Specifically, when joint or individual privacy is required, we refer to the PLT problem as Joint PLT 
%In particular, PLT with joint or individual privacy is referred to as \emph{Jointly-Private Linear Transformation (JPLT)} or \emph{Individually-Private Linear Transformation (IPLT)}, respectively. 
%The multi-server setting of JPLT or IPLT---although not the focus of this work---can be defined similarly. 
%In this case, instead of a single server, there are multiple servers---non-colluding or with limited collusion---that store identical copies of the messages, and 
%the identities of the messages in the user's demand support set must be kept private (jointly or individually) from each of the servers. 
%We refer to the problems of \emph{PLT with Joint Privacy} and \emph{PLT with Individual Privacy} as \emph{JPLT} and \emph{IPLT}, respectively. 
%The single/multi-server settings of PLT and their variants with different types of side information generalize several problems in the literature, namely, 
%Private Information Retrieval (PIR) (see, e.g.,~\cite{}), and Private Linear Computation (PLC) (see, e.g.,~\cite{}).
%In particular, PLT reduces to PIR and PLC for $L=D$ and $L=1$, respectively. 
%We refer to the PLT problem as \emph{PLT with Joint or individual Privacy} when joint or individual privacy is required.
%Following the convention in the PIR and PLC literature, we measure the efficiency of a JPLT or IPLT protocol by
We define the \emph{rate} of an IPLT protocol as the ratio of the entropy of the demand (i.e., $H(\mathbf{Z})=L\theta$) to the entropy of the answer (i.e., $H(\mathbf{A})$). We also define the \emph{capacity} of the IPLT setting as the supremum of rates over all IPLT protocols.
%The goal is to design PLT protocols that minimize the download cost. 
%The \emph{download cost} of a JPLT or IPLT protocol is defined as the ratio of the entropy of the answer to that of a message, i.e., $H(\mathbf{A})/\ell$.
%A protocol that generates a query $\mathrm{Q}^{[\mathrm{W},\mathrm{V}]}$ (and the corresponding answer $A^{[\mathrm{W},\mathrm{V}]}$, given $Q^{[\mathrm{W},\mathrm{V}]}$ and $\mathrm{X}$) for any $(\mathrm{W},\mathrm{V})$ is said to be a \textcolor{red}{single-server} \emph {PLT scheme} if both the privacy and recoverability conditions are satisfied. 
%The rate of a \emph{PLT} protocol is defined as the ratio of the entropy of a message, i.e., $\ell$, to the entropy of the answer $A$. 
%\textcolor{red}{Do we need to define capacity anymore?} The capacity of the PLT setting is defined as the supremum of rates over all PLT protocols.
 %We refer to this problem as single-server Jointly-Private Computation with Private Coded Side Information (\emph{JPC-PCSI}). 
%We similarly define the \emph{JPC-PSI} problem for the setting in which the user’s side information is the support set $X_S$ itself, instead of a linear combination of the messages in $X_S$.
%The extension of the single-server PLT problem to the multi-server setting where there are multiple servers---non-colluding or with limited collusion---that store identical copies of the messages is the focus of an ongoing work.

In this work, our goal is to establish (preferably matching) non-trivial lower and upper bounds (in terms of the parameters $K$, $D$, and $L$) on the capacity of the IPLT setting. % when joint or individual privacy is required. 

\section{A Necessary Condition for IPLT Protocols}
The individual privacy and recoverability conditions yield a necessary (but not sufficient) condition for any IPLT protocol, stated in Lemma~\ref{lem:NCIPLT}. The proof is straightforward by the way of contradiction, and hence, omitted. % These results follow from a simple proof by contradiction, and the proofs are omitted for brevity. %The proofs, omitted for brevity, are by the way of contradiction. 

\begin{lemma}\label{lem:NCIPLT}
Given any IPLT protocol, for any $i\in\mathcal{K}$, there must exist $\mathrm{W}^{*}\in\mathscr{W}$ with $i\in \mathrm{W}^{*}$, and $\mathrm{V}^{*}\in\mathscr{V}$, such that \[H(\mathbf{Z}^{[\mathrm{W}^{*},\mathrm{V}^{*}]}| \mathbf{A}, \mathbf{Q}) = 0.\] 	
\end{lemma}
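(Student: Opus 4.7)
The plan is to argue by contradiction, as the authors indicate. Suppose for some $i\in\mathcal{K}$ the conclusion fails, i.e., every pair $(\mathrm{W}^*,\mathrm{V}^*)$ with $i\in\mathrm{W}^*$ satisfies $H(\mathbf{Z}^{[\mathrm{W}^*,\mathrm{V}^*]}\mid\mathbf{A},\mathbf{Q})>0$. The goal will be to produce a pair with $i\in\mathrm{W}^*$ for which the conditional entropy is actually zero, yielding a contradiction.

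The first step is to use the individual privacy condition to locate a candidate pair. Since $\Pr(i\in\mathbf{W})=D/K>0$ and individual privacy forces $\Pr(i\in\mathbf{W}\mid\mathbf{Q}=\mathrm{Q})=\Pr(i\in\mathbf{W})>0$ for every $\mathrm{Q}$ in the support of $\mathbf{Q}$, at least one pair $(\mathrm{W}^*,\mathrm{V}^*)$ with $i\in\mathrm{W}^*$ lies in the support of $(\mathbf{W},\mathbf{V})$. The second step is to apply the recoverability condition $H(\mathbf{Z}\mid\mathbf{A},\mathbf{Q},\mathbf{W},\mathbf{V})=0$; restricting to the event $\{\mathbf{W}=\mathrm{W}^*,\mathbf{V}=\mathrm{V}^*\}$ gives $H(\mathrm{V}^*\mathbf{X}_{\mathrm{W}^*}\mid\mathbf{A},\mathbf{Q},\mathbf{W}=\mathrm{W}^*,\mathbf{V}=\mathrm{V}^*)=0$. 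The crux is to lift this zero-entropy statement to $H(\mathrm{V}^*\mathbf{X}_{\mathrm{W}^*}\mid\mathbf{A},\mathbf{Q})$. Because $\mathbf{X}$ is independent of $(\mathbf{W},\mathbf{V})$ and $\mathbf{A}$ is a deterministic function of $(\mathbf{Q},\mathbf{X})$, a direct calculation via Bayes' rule shows that the posterior of $\mathbf{X}$ given $(\mathbf{A}=\mathrm{A},\mathbf{Q}=\mathrm{Q})$ coincides with its posterior given $(\mathbf{A}=\mathrm{A},\mathbf{Q}=\mathrm{Q},\mathbf{W}=\mathrm{W}^*,\mathbf{V}=\mathrm{V}^*)$ on every realization of positive probability.

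The main obstacle will be ensuring the transfer works for every realization of $(\mathbf{A},\mathbf{Q})$ in the marginal support, not only for those consistent with the specific pair $(\mathrm{W}^*,\mathrm{V}^*)$: individual privacy supplies \emph{some} pair with $i$ in its support for each query $\mathrm{Q}$, but a priori that pair may vary with $\mathrm{Q}$. To resolve this, I would exploit the further structural consequence of individual privacy that $\mathbf{Q}$ is independent of the event $\{i\in\mathbf{W}\}$, which constrains how the query distribution can couple to message membership and lets one choose a single $(\mathrm{W}^*,\mathrm{V}^*)$ consistent across the support of $\mathbf{Q}$. Pushing this through closes the contradiction and establishes the lemma.
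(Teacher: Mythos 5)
Your transfer step is correct: since $\mathbf{X}$ is independent of $(\mathbf{W},\mathbf{V},\mathbf{Q})$ and $\mathbf{A}$ is a deterministic function of $(\mathbf{Q},\mathbf{X})$, the conditional law of $\mathbf{X}$ given $(\mathbf{A}=\mathrm{A},\mathbf{Q}=\mathrm{Q})$ is unchanged by further conditioning on $(\mathbf{W}=\mathrm{W}^*,\mathbf{V}=\mathrm{V}^*)$, provided that joint event has positive probability. Combined with recoverability this gives $H(\mathbf{Z}^{[\mathrm{W}^*,\mathrm{V}^*]}\mid\mathbf{A}=\mathrm{A},\mathbf{Q}=\mathrm{Q})=0$, but only for those $(\mathrm{A},\mathrm{Q})$ that are \emph{consistent with} $(\mathrm{W}^*,\mathrm{V}^*)$, i.e.\ with $\Pr(\mathbf{W}=\mathrm{W}^*,\mathbf{V}=\mathrm{V}^*\mid\mathbf{Q}=\mathrm{Q})>0$; and you correctly flag that to pass to the unconditioned $H(\mathbf{Z}^{[\mathrm{W}^*,\mathrm{V}^*]}\mid\mathbf{A},\mathbf{Q})=0$ one would need a single pair consistent with \emph{every} query realization.

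The gap is in your proposed resolution of that obstacle. The independence of $\mathbf{Q}$ from the event $\{i\in\mathbf{W}\}$ does not, as you assert, ``let one choose a single $(\mathrm{W}^*,\mathrm{V}^*)$ consistent across the support of $\mathbf{Q}$.'' Individual privacy only says that, conditioned on each $\mathrm{Q}$, \emph{some} $\mathrm{W}\ni i$ has positive posterior probability; which one can vary with $\mathrm{Q}$, and the candidates across different $\mathrm{Q}$ can be pairwise incompatible. Take $K=4$, $D=2$, and a query that (deterministically, given $\mathrm{W}$) reveals the unordered pair $\{\mathrm{W},\mathcal{K}\setminus\mathrm{W}\}$: each of the three possible queries leaves $\Pr(i\in\mathbf{W}\mid\mathbf{Q})=1/2$ for every $i$, so individual privacy holds, yet for $i=1$ the unique $\mathrm{W}\ni 1$ consistent with the three queries is $\{1,2\}$, $\{1,3\}$, $\{1,4\}$ respectively, and no single $\mathrm{W}^*\ni 1$ is consistent with all of them. (This is exactly the partition structure that the GPC-PIA scheme of Section~\ref{sec:IPLT-Ach} exploits.) So the last step of your proposal is unjustified and in fact false as stated. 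What your argument \emph{does} establish, and what the converse of Section~\ref{sec:IPLT-Conv} actually needs if the entropy chain there is carried out conditionally on $\mathbf{Q}=\mathrm{Q}$ and then averaged over $\mathrm{Q}$, is the per-query form: for every $\mathrm{Q}$ in the support of $\mathbf{Q}$ and every $i\in\mathcal{K}$ there exist $\mathrm{W}^*\ni i$ and $\mathrm{V}^*$, possibly depending on $\mathrm{Q}$, with $H(\mathbf{Z}^{[\mathrm{W}^*,\mathrm{V}^*]}\mid\mathbf{A},\mathbf{Q}=\mathrm{Q})=0$. You should state and prove that version, rather than the realization-uniform one.
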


%Note that the condition in Lemma~\ref{lem:1} or Lemma~\ref{lem:2} is not necessarily a sufficient condition for the JPLT or IPLT problem. More specifically, 

%Puncturing a (linear) code at a coordinate is performed by deleting the column pertaining to that coordinate from the generator matrix of the code.

The result of Lemma~\ref{lem:NCIPLT} establishes a connection between \emph{linear codes} with a certain constraint and \emph{linear schemes} for IPLT, i.e., any scheme in which the server's answer to the user's query consists of only \emph{linear} combinations of the messages. 
In particular, the matrix of combination coefficients---pertaining to the linear combinations in the answer, must be the generator matrix of a linear code of length $K$ that satisfies the following condition: for any coordinate $i$, there must exist $K-D$ coordinates different from $i$ such that the code resulting from puncturing\footnote{To puncture a linear code at a coordinate, the column corresponding to that coordinate is deleted from the generator matrix of the code.} at these $K-D$ coordinates contains $L$ codewords that are MDS. Note, however, that this condition is only necessary and not sufficient. In particular, a sufficient (yet not necessary) condition is that, for \emph{every coordinate $i$}, the punctured codes resulting from puncturing at any $K-D$ other coordinates (different from $i$) collectively contain the \emph{same number of groups} of $L$ codewords that are MDS. Maximizing the rate of a linear IPLT scheme is then equivalent to minimizing the dimension of a linear code that satisfies this sufficient condition. %with \emph{minimum dimension} yields a linear IPLT scheme with \emph{maximum rate}.  
%These sufficient conditions are not necessary, and are a special case of a necessary and sufficient condition. 
%Designing a linear PLT scheme (for joint or individual privacy) with \emph{maximum rate} reduces to constructing a linear code with \emph{minimum dimension} satisfying the corresponding sufficient condition. 
However, despite the fact that this sufficient condition is stronger than the necessary condition provided by Lemma~\ref{lem:NCIPLT}, the former is more combinatorial, while the latter is more information-theoretic and hence more useful in the converse proof. 

\section{Main Results}
This section summarizes our main results for IPLT. % with joint and individual privacy. 

\begin{comment}
\begin{theorem}\label{thm:JPLT}
For the JPLT setting with $K$ messages, demand's support size $D$, and demand's dimension $L$, the capacity is given by $L/(K-D+L)$. 	
\end{theorem}

The proof of converse is based on information theoretic arguments relying mostly on the necessary condition for joint privacy provided by Lemma~\ref{lem:NCJPLT} (see Section~\ref{subsec:JPLT-Conv}). 
The converse bound naturally serves as an upper bound on the rate of any JPLT protocol. 
%Note that this upper bound serves as the minimum download cost lower bound for all JPLT protocols. 
%Using a similar technique we can prove a converse bound for the JPLT setting with side information. 
We prove the achievability by designing a linear JPLT protocol, termed the \emph{Specialized MDS Code protocol}, that achieves the converse bound (see Section~\ref{subsec:JPLT-Ach}). 
This protocol generalizes those in~\cite{} and~\cite{} for single-server PIR and PLC (without SI) with joint privacy, and is based on the idea of extending the MDS code generated by the coefficient matrix of the demand. 
In particular, when the coefficient matrix of the demand has a Vandermonde structure, %chosen uniformly at random from the ensemble of $L\times D$ Vandermonde matrices, 
we give an explicit construction of a Generalized Reed-Solomon (GRS) code that contains a specific collection of codewords---specified by the support index set and the coefficient matrix of the demand. 
%Examples of the Specialized MDS Code protocol can be found in a long version of this work, \cite{}. 

%Also, for the case that the user has side information (i.e, JPLT with side information setting) the same scheme achieves the corresponding minimum download cost lower bound (i.e., $K-D-M+L$). For the proof of achievability we construct a \emph{Generalized Reed Solomon (GRS)} matrix to serve as a generator matrix of a linear JPLT code.

%Comparing these results with those for single-server PIR or PLFC with SI~\cite{}, it can be seen that, for any $1<L<D$, JPLT with SI can be performed much more efficiently.

%\begin{remark}\label{rem:4}
%\emph{When the user has (coded or uncoded) side information, then the JPLT protocol is a generalization of the two extreme case poroblems  in~\cite{blah} for IPC-CSI, IPC-SI and in~\cite{blah} for PIR when $L=D$ and $L-1$, respectively.}	
%\end{remark}

% if $D\mid K$, the capacity is given by $D/K$; otherwise, if $D\nmid K$, 
\end{comment}

\begin{theorem}\label{thm:IPLT}
For the IPLT setting with $K$ messages, demand's support size $D$, and demand's dimension $L$, the capacity is lower and upper bounded by ${(\lfloor \frac{K}{D}\rfloor+\min\{\frac{R}{S},\frac{R}{L}\})^{-1}}$ and $(\lfloor \frac{K}{D}\rfloor+\min\{1,\frac{R}{L}\})^{-1}$, respectively, where $R\triangleq K \pmod D$ and $S\triangleq \gcd(D+R,R)$. The lower and upper bounds match when $R\leq L$ or $R\mid D$. 	
\end{theorem}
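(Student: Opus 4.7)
The plan is to prove the upper and lower bounds on capacity separately, then verify arithmetically that they coincide when $R\leq L$ or $R\mid D$. The converse rests on Lemma~\ref{lem:NCIPLT} and the MDS shortening property, reduced to a dimension-minimization problem solved by an exchange argument. The achievability is furnished by the GPC-PIA protocol, which extends our earlier IPIR and IPLC protocols in~\cite{HKRS2019,HS2020}.

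For the converse, I restrict without loss of generality to linear schemes: $\mathbf{A}=M\mathbf{X}$ for a matrix $M=M(\mathbf{Q})$, giving $H(\mathbf{A})/\theta=\dim\mathcal{C}_M$, where $\mathcal{C}_M\subseteq\mathbb{F}_p^K$ is the row span of $M$. Applying Lemma~\ref{lem:NCIPLT}, for every $i\in\mathcal{K}$ there exist $\mathrm{W}^*_i\ni i$ and $\mathrm{V}^*_i\in\mathscr{V}$ such that the $L$-dimensional MDS subcode $C^*_i$ supported on $\mathrm{W}^*_i$ and generated by the zero-padded rows of $\mathrm{V}^*_i$ is contained in $\mathcal{C}_M$. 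Let $W_1,\ldots,W_t$ enumerate the distinct supports $\mathrm{W}^*_i$ (these cover $\mathcal{K}$), fix any ordering, and set $n_j\triangleq|W_j\setminus(W_1\cup\cdots\cup W_{j-1})|$. A codeword in $C_j\cap(C_1+\cdots+C_{j-1})$ is supported on $W_j\cap(W_1\cup\cdots\cup W_{j-1})$, i.e., it is a codeword of $C_j$ that vanishes on $n_j$ coordinates; by the MDS shortening property, such codewords form a subspace of dimension at most $\max\{0,L-n_j\}$. Hence
\[
\dim\mathcal{C}_M\;\geq\;\dim\Bigl(\sum_{j=1}^{t}C_j\Bigr)\;\geq\;\sum_{j=1}^{t}\min\{L,n_j\}.
\]
Since $(n_1,\ldots,n_t)$ is a composition of $K$ with each $n_j\leq D$, a standard ``largest blocks first'' exchange argument minimizes $\sum_j\min\{L,n_j\}$ at $\lfloor K/D\rfloor$ blocks of size $D$ and one block of size $R$, yielding $\lfloor K/D\rfloor L+\min\{L,R\}$ and hence the capacity upper bound $(\lfloor K/D\rfloor+\min\{1,R/L\})^{-1}$.

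For the achievability, I construct the GPC-PIA protocol. It randomly partitions $\mathcal{K}$ into $q\triangleq\lfloor K/D\rfloor$ blocks of size $D$ and a residual block of size $R$, and asks the server for $L$ carefully chosen MDS combinations per full block together with a small number of additional combinations for the residual block. The new ingredient beyond~\cite{HKRS2019,HS2020} is a partial interference alignment on the residual block: when $R\leq L$, the user downloads $R$ extra symbols to recover the residual messages directly; when $R>L$, the user selects whichever is cheaper between direct retrieval and an interference-aligned construction costing $LR/S$ symbols, where $S\triangleq\gcd(D+R,R)$ is the natural granularity arising from stitching overlapping $D$-wide MDS blocks into a $(D+R)$-wide MDS generator. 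The overall design ensures (i) recoverability of $\mathrm{Z}^{[\mathrm{W},\mathrm{V}]}$ from the MDS-aligned combinations on the block containing $\mathrm{W}$, and (ii) the individual-privacy condition $\Pr(i\in\mathbf{W}\mid\mathbf{Q})=D/K$ for every $i\in\mathcal{K}$, verified by a direct probabilistic computation on the symmetric partition. The total download is $L(q+\min\{R/S,R/L\})\theta$, establishing the capacity lower bound.

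The principal obstacle is the explicit construction and analysis of the partial interference alignment for the regime $R>L$ with $R\nmid D$, where the $\gcd$-governed fractional cost $LR/S$ arises: one must exhibit a single MDS generator whose submatrices remain MDS on all supports queried by the scheme, and verify that the aligned combinations together with the random partition preserve the per-index marginal $D/K$. Once this is in place, the claimed tightness under $R\leq L$ or $R\mid D$ follows by arithmetic: for $R\leq L$, $S\leq R\leq L$ gives $R/L\leq 1\leq R/S$, so $\min\{R/S,R/L\}=R/L=\min\{1,R/L\}$; for $R\mid D$, $S=R$, so $R/S=1$ and both minima agree.
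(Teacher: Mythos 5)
Your converse has a substantive gap: you restrict ``without loss of generality'' to linear schemes (${\mathbf{A}=M\mathbf{X}}$) and then bound $\dim\mathcal{C}_M$, but the single-server IPLT problem does not come with a theorem that optimal schemes may be taken linear, and you do not prove one. The paper's converse avoids this entirely: it works directly with entropies, using the chain
$H(\mathbf{A})\geq\sum_{i}H(\mathbf{Z}_i\mid\mathbf{Z}_{i-1},\dots,\mathbf{Z}_1)$
and then lower-bounding each conditional entropy by $\min\{N_i,L\}\theta$. The linear-algebraic ingredient you invoke (MDS shortening) is indeed the same mechanism the paper uses to show this per-term bound---the paper zero-pads each $\mathrm{V}_i$ to $\mathrm{U}_i$ and argues $\min\{N_i,L\}$ rows of $\mathrm{U}_i$ are independent of all previous $\mathrm{U}_j$'s---but the paper applies it to the \emph{demands} $\mathbf{Z}_i$ (which are linear in $\mathbf{X}$ by definition of the problem), not to a hypothetical linear answer. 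You should either prove that linear answers are optimal, or rephrase your argument in the entropy framework so it applies to arbitrary answers; as written, the converse does not cover general schemes. A secondary point: the paper's iterative selection of indices $k_i\notin\cup_{j<i}\mathrm{W}_j$ guarantees $N_i\geq 1$ for every term, whereas your enumeration of ``distinct supports'' can produce $n_j=0$ terms; harmless for the bound, but you also need $n_1=D$ (forced since $W_1$ is the first set) to recover exactly the paper's ILP constraints. The optimization step is equivalent: the paper sets it up as an ILP solved by cutting planes, you use a largest-blocks-first exchange argument; both give $L\lfloor K/D\rfloor+\min\{L,R\}$.

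Your achievability section is a high-level description, not a proof. You yourself flag the ``principal obstacle''---constructing the partial-interference-aligned MDS generator for $R>L$, $R\nmid D$ and verifying both recoverability and the per-index marginal $D/K$---and then leave it unresolved. That obstacle is precisely what the paper's GPC-PIA construction addresses: for $L\leq S$ it builds the residual block $\mathrm{G}_{n+1}=[\mathrm{B}_1,\mathrm{B}_2]$ using Cauchy-structured multipliers $\omega_{i,j}=(x_i-y_j)^{-1}$, solves for the $\alpha$'s via the super-regularity of Cauchy matrices so that row-block operations on $\mathrm{G}_{n+1}$ recover $\tilde{\mathrm{V}}$, and proves privacy by a Bayes computation on the random block index $i^*$ and the permutation $\pi$; for $L>S$ it instead uses an $[D+R,\,L+R]$ MDS code built by the specialized-MDS technique from the joint-privacy paper. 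Without an explicit construction and a privacy computation, the lower bound on capacity is asserted rather than proved. Your arithmetic for the matching conditions (for $R\leq L$, $S\mid R$ gives $S\leq R\leq L$, hence both minima equal $R/L$; for $R\mid D$, $\gcd(D+R,R)=R$ gives $R/S=1$) is correct, but it rests on bounds that your proposal has not fully established.
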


To prove the converse bound, we use the necessary condition for IPLT protocols provided by Lemma~\ref{lem:NCIPLT} along with information-theoretic arguments, and formulate the problem as an integer linear programming (ILP) problem. 
Solving this ILP, we obtain the upper bound on the capacity (see Section~\ref{sec:IPLT-Conv}). 
The lower bound on the capacity is proven by constructing an IPLT protocol, called \emph{Generalized Partition-and-Code with Partial Interference Alignment (GPC-PIA)} (see Section~\ref{sec:IPLT-Ach}). 
This protocol is a generalization of the protocols we previously proposed in~\cite{HKRS2019} and~\cite{HS2020} for single-server PIR and PLC (without SI) with individual privacy. 
The main ingredients of the GPC-PIA protocol are as follows: 
(i) constructing a properly designed family of subsets of messages probabilistically, where some subsets are possibly overlapping, and 
(ii) designing a number of linear combinations for each subset judiciously, where the linear combinations pertaining to the overlapping subsets are partially aligned.

\begin{remark}\label{rem:IPLT1}
\emph{As shown in~\cite{HS2020}, the capacity of PLC with individual privacy, which is a special case of IPLT for $L=1$, is given by $\lceil \frac{K}{D+M}\rceil^{-1}$, where the user initially knows $M$ uncoded messages or one linear combination of $M$ messages as side information. 
The capacity of this setting was left open for $M=0$. 
Theorem~\ref{thm:IPLT} provides a lower bound ${(\lfloor \frac{K}{D}\rfloor+\min\{\frac{R}{S},R\})^{-1}}$ and an upper bound ${(\lfloor \frac{K}{D}\rfloor+\min\{1,R\})^{-1}}$ on the capacity of this setting. 
Interestingly, these bounds are matching when $R=0$ or $R\mid D$, settling the capacity of PLC (without SI) with individual privacy for these cases. 
For $L=D$, IPLT reduces to PIR (without SI) with individual privacy, and an optimal scheme in this case %, similar to the case of joint privacy, 
is to download the entire dataset~\cite{HKRS2019}.}
\end{remark}

\begin{remark}\label{rem:JPLT2}
\emph{The result of Theorem~\ref{thm:IPLT} can be extended to IPLT with Side Information (SI). 
We consider two types of SI---previously studied in the PIR and PLC literature: \emph{Uncoded SI (USI)} (see~\cite{KGHERS2020}), and \emph{Coded SI (CSI)} (see~\cite{HKS2019Journal}).
%For PIR and PLC, which are special cases of PLT, two types of side information were previously studied in the literature: \emph{Uncoded SI (USI)} and \emph{Coded SI (CSI)}. 
In the case of USI (or CSI), the user initially knows a subset of $M$ messages (or $L$ MDS coded combinations of $M$ messages). For both USI and CSI, the identities of these $M$ messages %in the support set of side information 
are initially unknown by the server. 
When the identity of every message in the support sets of demand and side information must be protected individually, a slightly modified version of the GPC-PIA scheme (for IPLT without SI) achieves the rate ${(\lfloor \frac{K}{D+M}\rfloor+\min\{\frac{R}{S},\frac{R}{L}\})^{-1}}$ for both IPLT with USI and CSI, where ${R= K \pmod {D+M}}$ and ${S= \gcd(D+M+R,R)}$. This result generalizes the results of~\cite{HKRS2019} and~\cite{HS2020} for PIR and PLC with individual privacy. The optimality of this rate, however, remains open in general.}
\end{remark}

\section{Proof of Converse}\label{sec:IPLT-Conv}
%The proof of converse is given in Section~\ref{subsec:IPLT-Conv}, and the achievability proof is given in Section~\ref{subsec:IPLT-Ach}.

%\subsection{Converse}\label{subsec:IPLT-Conv}

\begin{lemma}\label{lem:IPLT-Conv}
The rate of any IPLT protocol for $K$ messages, demand's support size $D$ and dimension $L$, is upper bounded by $(\lfloor \frac{K}{D}\rfloor+\min\{1,\frac{R}{L}\})^{-1}$, where ${R\triangleq K \pmod D}$.
\end{lemma}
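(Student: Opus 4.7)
The plan is to combine the structural constraint of Lemma~\ref{lem:NCIPLT} with a greedy construction of a ``recoverable'' sub-collection of demands, and reduce the resulting bound on $H(\mathbf{A})$ to a short integer program. Write $N \triangleq \lfloor K/D\rfloor$ so that $K = ND + R$. For each $i \in \mathcal{K}$, let $(\mathrm{W}^*_i, \mathrm{V}^*_i)$ be as in Lemma~\ref{lem:NCIPLT} and set $\mathbf{Z}_i \triangleq \mathbf{Z}^{[\mathrm{W}^*_i, \mathrm{V}^*_i]}$. Because $\mathbf{Q}$ is independent of $\mathbf{X}$ and $H(\mathbf{Z}_i \mid \mathbf{A}, \mathbf{Q}) = 0$, a chain-rule manipulation analogous to the one in the commented-out JPLT proof gives $H(\mathbf{A}) \geq H(\mathbf{A} \mid \mathbf{Q}) \geq H(\{\mathbf{Z}_i\}_{i \in \mathcal{I}})$ for every $\mathcal{I} \subseteq \mathcal{K}$, so it suffices to exhibit $\mathcal{I}$ with $H(\{\mathbf{Z}_i\}_{i \in \mathcal{I}}) \geq (NL + \min\{L, R\})\theta$.

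I would build $\mathcal{I} = \{i_1, i_2, \ldots\}$ greedily: at step $j$ pick any $i_j \in \mathcal{K} \setminus U_{j-1}$ with $U_{j-1} \triangleq \bigcup_{k<j} \mathrm{W}^*_{i_k}$, then record $s_j \triangleq |\mathrm{W}^*_{i_j} \setminus U_{j-1}|$ and update $U_j \triangleq U_{j-1} \cup \mathrm{W}^*_{i_j}$; terminate once $U_j = \mathcal{K}$. Since $i_j \in \mathrm{W}^*_{i_j} \setminus U_{j-1}$, each $s_j \in \{1, \ldots, D\}$, and by a ``new-element'' telescoping $\sum_j s_j = K$. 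As the $\mathbf{X}_i$'s are i.i.d.\ uniform over $\mathbb{F}_q$, $H(\{\mathbf{Z}_{i_j}\}_j)/\theta$ equals the rank of the combined $(|\mathcal{I}|L) \times K$ coefficient matrix. The key rank-increment inequality at step $j$ comes from the fact that all earlier rows vanish on $\mathcal{K} \setminus U_{j-1}$, while the $L$ new rows restricted to the $s_j$ ``fresh'' columns in $\mathrm{W}^*_{i_j} \setminus U_{j-1}$ form an $L \times s_j$ submatrix of the MDS matrix $\mathrm{V}^*_{i_j}$ of rank exactly $\min\{L, s_j\}$. A standard block-lower-triangular argument then gives
\[
H(\{\mathbf{Z}_{i_j}\}_j) \;\geq\; \Bigl(\sum_j \min\{L, s_j\}\Bigr)\theta.
\]

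The capacity bound thus reduces to the integer program
\[
\min \sum_j \min\{L, s_j\} \quad \text{s.t.} \quad s_j \in \{1, \ldots, D\}, \;\; \sum_j s_j = K.
\]
Rewriting the objective as $K - \sum_j (s_j - L)_+$ turns this into a maximization of $\sum_j (s_j - L)_+$ under the same constraints; since the per-unit gain $(s - L)_+/s$ is largest at $s = D$, the optimum uses $N$ terms with $s_j = D$ plus (when $R > 0$) one residual term with $s_j = R$, yielding $\sum_j \min\{L, s_j\} \geq NL + \min\{L, R\}$. Chaining this with the earlier reductions gives $H(\mathbf{A}) \geq (NL + \min\{L, R\})\theta$, hence the rate is at most $L/(NL + \min\{L, R\}) = (\lfloor K/D\rfloor + \min\{1, R/L\})^{-1}$.

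I expect the rank-increment claim at each greedy step to be the main technical point: it hinges on the MDS property of $\mathrm{V}^*_{i_j}$ ensuring that any $\min\{L, s_j\}$ of its columns are linearly independent, which is what turns the simple combinatorial greedy step into a genuine rank bound. The ensuing ILP and the final translation to the rate bound are then routine.
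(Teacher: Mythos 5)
Your proposal is correct and follows essentially the same approach as the paper's proof: both greedily select fresh indices outside the union of previously used support sets, invoke Lemma~\ref{lem:NCIPLT} to obtain recoverable demands, use the MDS property of the demand coefficient matrices to lower bound the rank increment at each step by $\min\{L, s_j\}$, and reduce to the same integer program in the block sizes. The only cosmetic difference is in the ILP solution: you solve it by rewriting $\min\{L,s\} = s - (s-L)_+$ and a per-unit greedy argument, whereas the paper cites Gomory's cutting-plane method; both arrive at the same optimizer.
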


\begin{proof}
Consider an arbitrary IPLT protocol that generates the query-answer pair $(\mathrm{Q}^{[\mathrm{W},\mathrm{V}]},\mathrm{A}^{[\mathrm{W},\mathrm{V}]})$ for any given $(\mathrm{W},\mathrm{V})$. 
To prove the rate upper bound in the lemma, we need to show that ${H(\mathbf{A})\geq (L\lfloor \frac{K}{D}\rfloor+\min\{L,R\})\theta}$. 
Recall that  $\mathbf{A}$ denotes $\mathbf{A}^{[\mathbf{W},\mathbf{V}]}$, and $\theta$ is the entropy of a message. 
Consider an arbitrary message index $k_1\in \mathcal{K}$. 
By the result of Lemma~\ref{lem:NCIPLT}, there exist $\mathrm{W}_1\in \mathscr{W}$ with $k_1\in \mathrm{W}_1$, and $\mathrm{V}_1\in\mathscr{V}$ such that $H(\mathbf{Z}_1|\mathbf{A},\mathbf{Q})=0$, where $\mathbf{Z}_1\triangleq \mathbf{Z}^{[\mathrm{W}_1,\mathrm{V}_1]}$. 
%By a simple application of the chain rule of conditional entropy, we can write
By the same arguments as in the proof of \cite[Lemma~2]{EHS2021Joint}, we have
\begin{align}
H(\mathbf{A})&\geq H(\mathbf{A}|\mathbf{Q})+H(\mathbf{Z}_1|\mathbf{A},\mathbf{Q}) \nonumber \\
& = {H(\mathbf{Z}_1|\mathbf{Q})+H(\mathbf{A}|\mathbf{Q},\mathbf{Z}_1}) \nonumber \\ 
& = {H(\mathbf{Z}_1)+H(\mathbf{A}|\mathbf{Q},\mathbf{Z}_1}) \label{eq:5}   
\end{align}
%Using the same arguments as in the proof of Lemma~\ref{lem:JPLT-Conv}, it follows that $H(\mathbf{Z}_1|\mathbf{Q})=H(\mathbf{Z}_1)$, noting that $\mathbf{Z}_1$ and $\mathbf{Q}$ are independent. %, and $\mathbf{Z}_1$ contains $L$ components that are independently and uniformly distributed over $\mathbb{F}_q$. 
%Thus, 
%\begin{equation}\label{eq:6}
%H(\mathbf{A})\geq H(\mathbf{Z}_1) + H(\mathbf{A}|\mathbf{Q},\mathbf{Z}_1)    
%\end{equation}
To further lower bound $H(\mathbf{A}|\mathbf{Q},\mathbf{Z}_1)$, we proceed as follows. Take an arbitrary message index $k_2\not\in \mathrm{W}_1$.
Again, by Lemma~\ref{lem:NCIPLT}, there exist $\mathrm{W}_2\in \mathscr{W}$ with $k_2\in \mathrm{W}_2$, and $\mathrm{V}_2\in\mathscr{V}$ such that $H(\mathbf{Z}_2|\mathbf{A},\mathbf{Q})=0$, where $\mathbf{Z}_2\triangleq \mathbf{Z}^{[\mathrm{W}_2,\mathrm{V}_2]}$. 
Using a similar technique as in~\eqref{eq:5}, it follows that $H(\mathbf{A}|\mathbf{Q},\mathbf{Z}_1)\geq  H(\mathbf{Z}_2|\mathbf{Q},\mathbf{Z}_1)+H(\mathbf{A}|\mathbf{Q},\mathbf{Z}_1,\mathbf{Z}_2)$, and consequently,  
\begin{equation}\label{eq:6}
H(\mathbf{A}|\mathbf{Q},\mathbf{Z}_1)\geq H(\mathbf{Z}_2|\mathbf{Z}_1)+H(\mathbf{A}|\mathbf{Q},\mathbf{Z}_2,\mathbf{Z}_1).    
\end{equation}
Combining~\eqref{eq:5} and~\eqref{eq:6}, we get 
\begin{equation}\label{eq:7}
H(\mathbf{A})\geq H(\mathbf{Z}_1)+H(\mathbf{Z}_2|\mathbf{Z}_1)+H(\mathbf{A}|\mathbf{Q},\mathbf{Z}_2,\mathbf{Z}_1).    
\end{equation} 

%Repeating this process $n$ times, it can be shown that for some $n$ (specified later), there exist $i_1,\dots,i_n\in \mathcal{K}$, $\mathrm{W}_1,\dots,\mathrm{W}_n\in \mathscr{W}$,  $\mathrm{V}_1,\dots,\mathrm{V}_n\in \mathscr{V}$ 

% ${k_i\not\in \cup_{1\leq j<i}\mathrm{W}_j}$, and 

We repeat this process multiple rounds until there is no message index left to take. 
Let $n$ be the total number of rounds, and let $k_1,\dots,k_n$ be the message indices chosen over the rounds. 
For every $i\in \{1,\dots,n\}$, let $\mathrm{W}_i\in \mathscr{W}$ with $k_i\in \mathrm{W}_i$ and $k_i\not\in \cup_{1\leq j<i}\mathrm{W}_j$, and $\mathrm{V}_i\in \mathscr{V}$, be such that ${H(\mathbf{Z}_{i}|\mathbf{A},\mathbf{Q})=0}$, where $\mathbf{Z}_{i}\triangleq \mathbf{Z}^{[\mathrm{W}_i,\mathrm{V}_i]}$. 
Note that $\cup_{1\leq i\leq n}\mathrm{W}_i = \mathcal{K}$. 
Similarly as before, we can show that
\begin{align}\label{eq:8}
H(\mathbf{A}) & \geq \sum_{1\leq i\leq n} H(\mathbf{Z}_i|\mathbf{Z}_{i-1},\dots,\mathbf{Z}_{1}) \nonumber \\ 
& \quad +H(\mathbf{A}|\mathbf{Q},\mathbf{Z}_n,\dots,\mathbf{Z}_1)\nonumber\\
&\geq \sum_{1\leq i\leq n} H(\mathbf{Z}_i|\mathbf{Z}_{i-1},\dots,\mathbf{Z}_{1}).
\end{align} 
Let $\mathbf{Z}_{i,1},\dots,\mathbf{Z}_{i,L}$ be the components of $\mathbf{Z}_i$, where $\mathbf{Z}_{i,l}\triangleq \mathrm{v}^{\mathsf{T}}_{i,l} \mathbf{X}_{\mathrm{W}_i}$, and $\mathrm{v}^{\mathsf{T}}_{i,l}$ is the $l$th row of $\mathrm{V}_i$.
%Similarly as in the proof of Lemma~\ref{lem:JPLT-Conv}, ${H(\mathbf{Z}_1)=L\theta}$. %since $\mathbf{Z}_1$ contains $L$ components that are independently and uniformly distributed over $\mathbb{F}_q$. 
Next, we show that 
\begin{equation}\label{eq:9}
H(\mathbf{Z}_i|\mathbf{Z}_{i-1},\dots,\mathbf{Z}_{1})\geq \min\{N_i,L\}\theta,    
\end{equation}
where ${N_i\triangleq |\mathrm{W}_i\setminus \cup_{1\leq j<i}\mathrm{W}_j|}$ is the number of message indices that belong to $\mathrm{W}_i$, but not $\cup_{1\leq j<i}\mathrm{W}_j$. (Note that ${N_1= |\mathrm{W}_1|=D}$.) 
This is equivalent to showing that $\mathbf{Z}_i$ contains $M_i\triangleq \min\{N_i,L\}$ components that are independent of the components of $\mathbf{Z}_1,\dots,\mathbf{Z}_{i-1}$.
Note that the components of $\mathbf{Z}_1,\dots,\mathbf{Z}_i$ are linear combinations of the messages $\mathbf{X}_1,\dots,\mathbf{X}_K$. 
%Thus, it suffices to show that there exist $m$ components in $\mathbf{Z}_i$, say, $\mathbf{Z}_{i,l_1},\dots,\mathbf{Z}_{i,l_m}$, such that, for any $j\in \{1,\dots,m\}$, the component  $\mathbf{Z}_{i,l_j}$ cannot be written as a linear combination of the components of  $\mathbf{Z}_{1},\dots,\mathbf{Z}_{i-1}$ and the components $\mathbf{Z}_{i,l_1},\dots,\mathbf{Z}_{i,l_{j-1}}$. 
Let $\mathrm{u}_{i,l}$ be a column-vector of length $K$ such that the vector $\mathrm{u}_{i,l}$ restricted to its components indexed by $\mathrm{W}_i$ is the vector $\mathrm{v}_{i,l}$, and the rest of the components of the vector $\mathrm{u}_{i,l}$ are all zero, and let  
%Let $\mathrm{u}_{i,l}$ be a column-vector of length $K$ such that $\mathrm{u}_{i,l}^{\mathsf{T}}\mathbf{X} = \mathrm{v}_{i,l}^{\mathsf{T}}\mathbf{X}_{\mathrm{W}_i}$, and let 
${\mathrm{U}_i \triangleq [\mathrm{u}_{i,1},\dots,\mathrm{u}_{i,L}]^{\mathsf{T}}}$. Thus, we need to show that the matrix $\mathrm{U}_i$ contains $M_i$ rows that are linearly independent of the rows of the matrices $\mathrm{U}_1,\dots,\mathrm{U}_{i-1}$. 
Note that the rows of the matrix $\mathrm{U}_i$ are linearly independent, because $\mathrm{U}_i$ contains $\mathrm{V}_i$ as a submatrix, and $\mathrm{V}_i$ is a full-rank matrix. 

Let $\mathrm{S}_i$ be an $L\times N_i$ submatrix of $\mathrm{U}_i$ formed by columns indexed by $\mathrm{W}_i\setminus \cup_{1\leq j<i}\mathrm{W}_j$. % (and all rows).
%Note that $\matrm{T}_i$ is an $L\times N_i$ matrix. 
Note that $\mathrm{S}_i$ is a submatrix of $\mathrm{V}_i$, and every $L\times L$ submatrix of $\mathrm{V}_i$ is full rank (by assumption). 
We consider two cases: (i) $N_i\leq L$, and (ii) $N_i>L$. 
In the case (i), the $N_i$ columns of $\mathrm{S}_i$ are linearly independent. Otherwise, any $L\times L$ submatrix of $\mathrm{V}_i$ that contains $\mathrm{S}_i$ cannot be full rank, and hence a contradiction. 
In the case (ii), any $L$ columns of $\mathrm{S}_i$ are linearly independent. 
Otherwise, $\mathrm{S}_i$ (and consequently, $\mathrm{V}_i$) contains an $L\times L$ submatrix that is not full rank, which is again a contradiction. 
By these arguments, the rank of $\mathrm{S}_i$ is $M_i=\min\{L,N_i\}$, and hence, $\mathrm{S}_i$ contains $M_i$ linearly independent rows. 

Without loss of generality, assume that the first $M_i$ rows of $\mathrm{S}_i$ are linearly independent. 
Moreover, the submatrix of $[\mathrm{U}_{1};\dots;\mathrm{U}_{i-1}]$ restricted to its columns indexed by ${\mathrm{W}_i\setminus \cup_{1\leq j<i}\mathrm{W}_j}$ (and all its rows) 
is an all-zero matrix, where $[\mathrm{U}_{1};\dots;\mathrm{U}_{i-1}]$ is a matrix formed by stacking $\mathrm{U}_1,\dots,\mathrm{U}_{i-1}$ vertically. 
By these arguments, it follows that the first $M_i$ rows of $\mathrm{U}_i$ are linearly independent of the rows of $[\mathrm{U}_{1};\dots;\mathrm{U}_{i-1}]$. This completes the proof of~\eqref{eq:9}. 

Combining~\eqref{eq:8} and~\eqref{eq:9}, we have 
\begin{equation}\label{eq:10}
H(\mathbf{A})\geq \sum_{1\leq i\leq n} \min\{L,N_i\}\theta    
\end{equation} 
%Recall that $N_i = |\mathrm{W}_i\setminus \cup_{1\leq j<i} \mathrm{W}_j|$. 
Recall that $N_i = |\mathrm{W}_i\setminus \cup_{1\leq j<i} \mathrm{W}_j|$. 
Note that ${1\leq N_i\leq D}$ since $\mathrm{W}_i\setminus \cup_{1\leq j<i} \mathrm{W}_j$ is a subset of $\mathrm{W}_i$, and the message index $k_i$ belongs to ${\mathrm{W}_i\setminus \cup_{1\leq j<i} \mathrm{W}_j}$. 
Moreover, ${\sum_{i=1}^{n} N_i = K}$ since $\mathrm{W}_1$, ${\mathrm{W}_2\setminus \mathrm{W}_1}$, $\dots$, ${\mathrm{W}_n\setminus \cup_{1\leq j<n} \mathrm{W}_j}$ form a partition of $\mathcal{K}$, and ${|\mathrm{W}_1|=N_1=D}$, ${|\mathrm{W}_2\setminus \mathrm{W}_1|=N_2}$, $\dots$, ${|\mathrm{W}_n\setminus \cup_{1\leq j<n} \mathrm{W}_j|=N_n}$. 
To obtain a converse bound, we need to minimize $\sum_{1\leq i\leq n} \min\{L,N_i\}$, subject to the constraints (i) ${N_1=D}$, and $1\leq N_i\leq D$ for any ${1<i\leq n}$, and (ii) ${\sum_{1\leq i\leq n} N_i = K}$. 
To this end, we reformulate this optimization problem as follows. 

For every ${j\in \{1,\dots,D\}}$, let $T_j\triangleq \sum_{1\leq i\leq n} \mathbbm{1}_{\{N_i=j\}}$ be the number of rounds $i$ such that ${N_i=j}$. 
Using this notation, the objective function $\sum_{1\leq i\leq n} \min\{L,N_i\}$ can be rewritten as $\sum_{1\leq j\leq D}  T_j\min\{L,j\}$, or equivalently, ${\sum_{1\leq j\leq L} T_j j+ \sum_{L<j\leq D} T_j L}$; the constraint (i) reduces to $T_j\in \mathbb{N}_0\triangleq\{0,1,\dots\}$ for every $1\leq j<D$, and ${T_D\in \mathbb{N}\triangleq \{1,2,\dots\}}$; and the constraint (ii) reduces to $\sum_{1\leq j\leq D} T_j j = K$. Thus, we need to solve the following integer linear programming (ILP) problem: 
\begin{eqnarray}\label{eq:11}
& \hspace{-1cm} \mathrm{minimize} & \sum_{1\leq j\leq L} T_j j+ \sum_{L<j\leq D} T_j L\\ \nonumber
& \hspace{-1cm} \mathrm{subject~to} & \sum_{1\leq j\leq D} T_j j= K\\ \nonumber
&& T_j\in \mathbb{N}_0, \quad \forall 1\leq j<D\\ \nonumber
&& T_D\in \mathbb{N}.
\end{eqnarray}
Solving this ILP using the Gomory's cutting-plane algorithm~\cite{MMWW2002}, an optimal solution is $T_D = \lfloor\frac{K}{D}\rfloor$, $T_R = 1$, and $T_j=0$ for all ${j\not\in\{R,D\}}$, where ${R\triangleq K \pmod D}$, and the optimal value is ${L\lfloor\frac{K}{D} \rfloor+\min\{L,R\}}$. Equivalently, $\sum_{1\leq i\leq n} \min\{L,N_i\}\geq L\lfloor\frac{K}{D} \rfloor+\min\{L,R\}$. Combining this inequality and the inequality~\eqref{eq:10}, we have ${H(\mathbf{A})\geq (L\lfloor\frac{K}{D} \rfloor+\min\{L,R\})\theta}$, as was to be shown.
\end{proof}

\section{Achievability Scheme}\label{sec:IPLT-Ach}
This section presents an IPLT protocol, called \emph{Generalized Partition-and-Code with Partial Interference Alignment (GPC-PIA)}, that achieves the rate ${(\lfloor \frac{K}{D}\rfloor+\min\{\frac{R}{S},\frac{R}{L}\})^{-1}}$, where $R\triangleq K \pmod D$ and $S\triangleq \gcd(D+R,R)$. Examples of this protocol are given in Section~\ref{sec:EX}. 

%---not presented here due to space constraints---are given in a long version of this work,~\cite{EHS2021Individual}. 

In the description of the protocol, %with a slight abuse of notation, 
we denote by $\tilde{\mathrm{W}}$ a sequence of length $D$ (instead of a set of size $D$) obtained by randomly permuting the elements in the demand's support index set $\mathrm{W}$, and denote by $\tilde{\mathrm{V}}$ an $L\times D$ matrix obtained by applying the same permutation on the columns of the demand's coefficient matrix $\mathrm{V}$.

We consider two cases: (i) $L\leq S$, and (ii) $L>S$. In each case, the protocol consists of three steps.  %The steps 2 and 3 are the same for both cases, whereas the step 1 is different for each case.  

\vspace{0.125cm}
\textbf{Step 1:} The user constructs a matrix $\mathrm{G}$ and a permutation $\pi$, and sends them as the query $\mathrm{Q}^{[\mathrm{W},\mathrm{V}]}$ to the server. The construction of the matrix $\mathrm{G}$ and the permutation $\pi$ for the cases (i) and (ii) is as follows. 

%we denote by 

%$\{l_1,\dots,l_{D}\}$ a random permutation of the elements in $\mathrm{W}$, and denote by $\{l_{D+1},\dots,l_{K}\}$ a random permutation of the elements in $\mathcal{K}\setminus \mathrm{W}$.  

\subsubsection*{{Case (i)}} Let $n\triangleq \lfloor \frac{K}{D}\rfloor-1$, $m\triangleq \frac{R}{S}+1$, and $t\triangleq \frac{D}{S}-1$. The user constructs an $L(n+m)\times K$ matrix $\mathrm{G}$,
\begin{equation}\label{eq:12}
\mathrm{G} = \begin{bmatrix} 
\mathrm{G}_1 & 0 & \dots & 0 & 0 \\
0 & \mathrm{G}_2 & \dots & 0 & 0\\
\vdots & \vdots & \ddots & \vdots & \vdots \\
0 & 0 & \dots & \mathrm{G}_{n} & 0\\ 
0 & 0 & \dots & 0 & \mathrm{G}_{n+1} 
\end{bmatrix}
\end{equation} where $\mathrm{G}_1,\dots\mathrm{G}_{n}$ are $L\times D$ matrices, and $\mathrm{G}_{n+1}$ is an ${Lm\times (D+R)}$ matrix. 
%We refer to the $L(n+m)\times D$ or $L(n+m)\times (D+R)$ submatrix of $\mathrm{B}$ that contains the block $\mathrm{B}_{i}$ as the $i$th column-block of $\mathrm{B}$. 
The blocks $\mathrm{G}_1,\dots,\mathrm{G}_n,\mathrm{G}_{n+1}$ are constructed as follows. 

The user randomly selects one of the blocks $\mathrm{G}_1,\dots,\mathrm{G}_{n+1}$, where each of the blocks $\mathrm{G}_1,\dots,\mathrm{G}_n$ is selected with probability $\frac{D}{K}$, and the block $\mathrm{G}_{n+1}$ is selected with probability $\frac{D+R}{K}$. 
Let $i^{*}$ be the index of the selected block. 
In the following, we consider the cases of $1\leq i^{*}\leq n$ and $i^{*}=n+1$ separately.  

First, consider the case of $1\leq i^{*}\leq n$. In this case, the user takes $\mathrm{G}_{i^{*}}$ to be the matrix $\tilde{\mathrm{V}}$, i.e., 
$\mathrm{G}_{i^{*}}=\tilde{\mathrm{V}}$. 
%In this case, the $j$th column in the $i^{*}$th column-block of $\mathrm{B}$ is said to be associated with the $j$th message index in $\mathrm{W}$. 
%We say that all columns of $\mathrm{B}$ in 
For any $i\in \{1,\dots,n\}\setminus \{i^{*}\}$, the user takes $\mathrm{G}_i$ to be a randomly generated MDS matrix of size ${L\times D}$. 
The construction of $\mathrm{G}_{n+1}$ is as follows.
First, the user randomly generates an MDS matrix $\mathrm{C}$ of size $L\times (D+R)$, and partitions the columns of $\mathrm{C}$ into $t+m$ column-blocks each of size $L\times S$, i.e., $\mathrm{C} = [\mathrm{C}_1,\dots,\mathrm{C}_{t+m}]$. 
%Note that $\mathrm{C}_i$'s are matrices of size $L\times S$.
Then, the user constructs ${\mathrm{G}_{n+1} = [\mathrm{B}_{1},\mathrm{B}_{2}]}$,  
\begin{equation*}
\mathrm{B}_{1} \triangleq 
\begin{bmatrix}
\alpha_1\omega_{1,1} \mathrm{C}_1 & \dots & \alpha_t\omega_{1,t} \mathrm{C}_t\\ 
%\omega_{2,1} \mathrm{B}_1 & \dots & \omega_{2,t} \mathrm{B}_t & 0 & \alpha_{2} \mathrm{B}_{t+2}  & \dots & 0\\ 
\vdots & \vdots & \vdots \\ 
\alpha_1\omega_{m,1} \mathrm{C}_1 & \dots & \alpha_t\omega_{m,t} \mathrm{C}_t
\end{bmatrix},
\end{equation*}
\begin{equation*}
\mathrm{B}_{2} \triangleq 
%\begin{bmatrix}
%\alpha_{t+1} \mathrm{C}_{t+1} & 0 & \dots & 0 \\ 
%0 & \alpha_{t+2} \mathrm{C}_{t+2} & \dots & 0 \\
%\vdots & \vdots & \ddots & \vdots \\
%0 & 0 & \dots & \alpha_{t+m} \mathrm{C}_{t+m} \\
%\end{bmatrix}.
\begin{bmatrix}
\alpha_{t+1} \mathrm{C}_{t+1} &   &  \\ 
% & \alpha_{t+2} \mathrm{C}_{t+2} & &  \\
 & \ddots &  \\
 &  & \alpha_{t+m} \mathrm{C}_{t+m} \\
\end{bmatrix},
\end{equation*} where 
$\alpha_1,\dots,\alpha_{t+m}$ are $t+m$ randomly chosen elements from $\mathbb{F}_p\setminus \{0\}$, and $\omega_{i,j}\triangleq (x_i-y_j)^{-1}$ for $1\leq i\leq m$ and $1\leq j\leq t$, where $x_1,\dots,x_m$ and $y_1,\dots,y_t$ are $t+m$ distinct elements chosen at random from $\mathbb{F}_p$. (Note that $\omega_{i,j}$ is the entry $(i,j)$ of an $m\times t$ Cauchy matrix.)

%(Note that $\omega_{i,j}$ is the entry $(i,j)$ of a Cauchy matrix.)   

Now, consider the case of $i^{*}=n+1$. For any ${i\in \{1,\dots,n\}}$, the user takes $\mathrm{G}_i$ to be a randomly generated MDS matrix of size $L\times D$. 
The user then constructs $\mathrm{G}_{n+1}$ with a structure similar to that in the previous case, but for a different choice of matrices $\mathrm{C}_1,\dots,\mathrm{C}_{t+m}$ and parameters $\alpha_{1},\dots,\alpha_{t+m}$, as specified below. 

First, the user partitions the columns of $\tilde{\mathrm{V}}$ into $t+1$ column-blocks each of size $L\times S$, i.e., $\tilde{\mathrm{V}}= [\tilde{\mathrm{V}}_1,\dots,\tilde{\mathrm{V}}_{t+1}]$. 
%Note that $\mathrm{V}_i$'s are matrices of size $L\times S$.  
The user then randomly chooses $t+1$ indices from ${\{1,\dots,t+m\}}$, say, $i_1,\dots,i_{t+1}$, and for any ${1\leq j\leq t+1}$, takes $\mathrm{C}_{i_j}=\tilde{\mathrm{V}}_j$. 
%In this case, the $j$th column in the $i^{*}$th column-block of $\mathrm{B}$ is said to be associated with the $j$th message index in $\mathrm{W}$.
Next, the user randomly generates the rest of $\mathrm{C}_i$'s such that $\mathrm{C} = [\mathrm{C}_1,\dots,\mathrm{C}_{t+m}]$ is an MDS matrix. To choose $\alpha_i$'s, the user proceeds as follows. 

We refer to the submatrix of $\mathrm{G}_{n+1}$ formed by the $i$th $L$ rows as the $i$th row-block of $\mathrm{G}_{n+1}$. 
Note that $\mathrm{G}_{n+1}$ has $m$ row-blocks.
Let $s$ be the number of column-block indices $i_j$ for $j\in \{1,\dots,t+m\}$ such that $i_j>t$. 
Note that $\mathrm{C}_{i_1},\dots,\mathrm{C}_{i_{t-s+1}}$ belong to the matrix $\mathrm{B}_1$, and $\mathrm{C}_{i_{t-s+2}},\dots,\mathrm{C}_{i_{t+1}}$ belong to the matrix $\mathrm{B}_2$. %, i.e., $s\triangleq |\{j\in \{1,\dots,t+m\}: i_j>t\}|$. 
Let $\mathcal{I}\triangleq\{i_1,\dots,i_{t+1}\}$ be the index set of those column-blocks of $\mathrm{C}$ that correspond to the column-blocks of $\tilde{\mathrm{V}}$. 
Let $\mathcal{I}_1\triangleq \{i_1,\dots,i_{t-s+1}\}$, and let $\mathcal{I}_2\triangleq \mathcal{I}\setminus \mathcal{I}_1$. 
Note that for any $i\in \mathcal{I}_1$, $\mathrm{C}_{i}$ appears in all row-blocks of $\mathrm{G}_{n+1}$, and 
for any $i\in \mathcal{I}_2$, $\mathrm{C}_{i}$ appears only in the $(i-t)$th row-block of $\mathrm{G}_{n+1}$. 

The parameters $\alpha_i$'s are to be chosen such that, by performing row-block operations on $\mathrm{G}_{n+1}$, the user can construct an $L\times (D+R)$ matrix with $t+m$ column-blocks each of size $L\times S$ that satisfies the following two conditions: 
(a) the blocks indexed by ${\{1,\dots,t+m\}\setminus \mathcal{I}}$ are all zero, and 
(b) the blocks indexed by $\mathcal{I} = \{i_1,\dots,i_{t+1}\}$ are $\mathrm{C}_{i_1},\dots,\mathrm{C}_{i_{t+1}}$. 
%To do so, the user proceeds as follows. 
%$\{k_1,\dots,k_{t-s+1}\}\triangleq \{1,\dots,t\}\setminus \{j_1,\dots,j_{s-1}\}$, and ${\{l_1,\dots,l_{s}\}\triangleq \{t+1,\dots,t+m\}\cap \mathcal{I}}$. 
For simplifying the notation, let 
$\{j_1,\dots,j_{s-1}\}\triangleq {\{1,\dots,t\}\setminus \mathcal{I}}$, and let $\{k_1,\dots,k_s\}\triangleq \mathcal{I}_2 =  \{i_{t-s+2},\dots,i_{t+1}\}$. 

To perform row-block operations, for every $i\in \mathcal{I}_2 = \{k_1,\dots,k_s\}$, %for every $i\in \mathcal{I}_2 = \{i_{t-s+2},\dots,i_{t+1}\}$, 
%for every $j\in \{{t-s+2},\dots,{t+1}\}$,
the user multiplies the $(i-t)$th row-block of $\mathrm{G}_{n+1}$ by a nonzero coefficient $c_{i}$. %, where $k_i\triangleq i-t+s-1$. 
Let $\mathrm{c} \triangleq [c_{k_1},\dots,c_{k_s}]^{\mathsf{T}}$. %Note that the vector $\mathrm{c}$ has length $s$. 
%Let $\mathrm{M}$ be an $(s-1)\times s$ matrix such that the entry $(i,j)$ of $\mathrm{M}$ is given by $\mathrm{M}_{i,j}\triangleq $ 
%$j_k\triangleq i_{k+t-s+1}-t$. 
%Taking $\alpha_{j_1},\dots,\alpha_{j_{s-1}}$ to be randomly chosen elements from ${\mathbb{F}_p\setminus \{0\}}$, in order 
% \mathrm{diag}(\alpha_{j_1},\alpha_{j_2}\dots,\alpha_{j_{s-1}})
Upon choosing $\alpha_{j_1},\dots,\alpha_{j_{s-1}}$ randomly from ${\mathbb{F}_p\setminus \{0\}}$, it follows that the condition (a) is satisfied so long as $\mathrm{M}_1\mathrm{c}$ is an all-zero vector, where  
\begin{equation*}
\mathrm{M}_1 \triangleq 
\begin{bmatrix}
\omega_{k_1-t,j_1} & \omega_{k_2-t,j_1} & \dots & \omega_{k_s-t,j_1}\\
\omega_{k_1-t,j_2} & \omega_{k_2-t,j_2} & \dots & \omega_{k_s-t,j_2}\\
\vdots & \vdots & \vdots & \vdots \\
\omega_{k_1-t,j_{s-1}} & \omega_{k_2-t,j_{s-1}} & \dots & \omega_{k_s-t,j_{s-1}}
\end{bmatrix}.
\end{equation*} Since $\mathrm{M}_1$ is a Cauchy matrix by the choice of $\omega_{i,j}$'s, the submatrix of $\mathrm{M}_1$ formed by columns indexed by $\{2,\dots,s\}$ (and all $s-1$ rows) is invertible~\cite{R2006}. 
Thus, for any arbitrary ${c_{k_1}\neq 0}$, there exists a unique solution for the vector $\mathrm{c}$. 
Note also that all the components of $\mathrm{c}$ are nonzero because $\mathrm{M}_{1}$ is a super-regular matrix, i.e., every square submatrix of $\mathrm{M}_{1}$ is invertible (by the properties of Cauchy matrices). 
% (up to a scalar multiplication)
%To satisfy the condition (ii), 

Given the vector $\mathrm{c}$, the condition (b) is satisfied so long as ${\alpha_{k_1} = 1/c_{k_1}}, \dots, {\alpha_{k_s} = 1/c_{k_s}}$, and 
$\alpha_{i_1},\dots,\alpha_{i_{t-s+1}}$ are chosen such that $\mathrm{M}_2\mathrm{c}$ is an all-one vector, where  
\begin{equation*}
\mathrm{M}_2 \triangleq 
\begin{bmatrix}
\alpha_{i_1}\omega_{k_1-t,i_1} & \dots & \alpha_{i_1}\omega_{k_s-t,i_1}\\
\alpha_{i_2}\omega_{k_1-t,i_2} & \dots & \alpha_{i_2}\omega_{k_s-t,i_2}\\
\vdots & \vdots & \vdots \\
\alpha_{i_{t-s+1}}\omega_{k_1-t,i_{t-s+1}} & \dots & \alpha_{i_{t-s+1}}\omega_{k_s-t,i_{t-s+1}}
\end{bmatrix}.
\end{equation*} 
Solving for $\alpha_{i_1},\dots,\alpha_{i_{t-s+1}}$, it follows that $\alpha_{i_j} \triangleq (\sum_{1\leq l\leq s} c_{k_l}\omega_{k_l-t,i_j})^{-1}$ for ${1\leq j\leq t-s+1}$. 
Note that $\alpha_{i_1},\dots,\alpha_{i_{t-s+1}}$ are nonzero. 
It should also be noted that $\sum_{1\leq l\leq s}c_{k_l}\omega_{k_l-t,i_j}$ is nonzero because the $j$th row of $\mathrm{M}_2$ %(normalized by $\alpha_{i_j}$) 
is linearly independent of the rows of $\mathrm{M}_1$. 
%(The rows of $\matrm{M}_2$ normalized by )
Lastly, for any $i\in \{1,\dots,t+m\}\setminus \{i_1,\dots,i_{t+1},j_1,\dots,j_{s-1}\}$, the user chooses $\alpha_i$ randomly from $\mathbb{F}_p\setminus \{0\}$. 
This completes the construction of the matrix $\mathrm{G}$. 

%Next, the user constructs a matrix $\mathrm{G}$ by performing a properly designed column permutation on the matrix $\mathrm{B}$. 

Next, the user constructs a permutation $\pi$ as follows.  
Let $\tilde{\mathrm{W}}=\{l_1,\dots,l_{D}\}$, and let $\mathcal{K}\setminus \mathrm{W}=\{l_{D+1},\dots,l_{K}\}$. %(Note that $\tilde{\mathrm{W}}$ is a sequence (not a set), and $\mathcal{K}\setminus \mathrm{W}$ is a set.) %, where $\{l_1,\dots,l_D\}$ is a random permutation of elements in $\mathrm{W}$
First, consider the case of $1\leq i^{*}\leq n$. 
In this case, the user constructs a permutation $\pi$ such that: for every ${1\leq j\leq D}$, $\pi(l_j) = (i^*-1)D+j$; and for every $D<j\leq K$, $\pi(l_j)$ is a randomly chosen element from $\mathcal{K}\setminus \{\pi(l_k)\}_{1\leq k<j}$. 
Next, consider the case of $i^{*} = n+1$. Recall that $i_1,\dots,i_{t+1}$ are the indices of those column-blocks of $\mathrm{C}$ that correspond to the column-blocks of $\tilde{\mathrm{V}}$. 
The user constructs a permutation $\pi$ such that: 
for every $1\leq k\leq {t+1}$ and $(k-1)S+1\leq j\leq kS$, ${\pi(l_j) = nD+(i_{k}-1)S+f_j}$, where $f_j= j\pmod S$ if $S\nmid j$, and $f_j = S$ if $S\mid j$; and for every $D<j\leq K$, $\pi(l_j)$ is a randomly chosen element from $\mathcal{K}\setminus \{\pi(l_k)\}_{1\leq k<j}$. 
%The user then constructs $\mathrm{G}$ by permuting the columns of $\mathrm{B}$ according to the inverse permutation $\pi^{-1}$, i.e., for every $l\in \mathcal{K}$, the $l$-th column of $\mathrm{G}$ is the $\pi(l)$-th column of $\mathrm{B}$. 

\vspace{0.125cm}
\subsubsection*{{Case (ii)}} Let $n\triangleq \lfloor \frac{K}{D}\rfloor-1$, and $m\triangleq \frac{R}{L}+1$. The user constructs an $L(n+m)\times K$ matrix $\mathrm{G}$ with a structure similar to~\eqref{eq:12}, where $\mathrm{G}_1,\dots,\mathrm{G}_n$ are constructed similarly as in the previous case, but the construction of $\mathrm{G}_{n+1}$ is different. Below, we will only explain how to construct $\mathrm{G}_{n+1}$. 

For the case of $1\leq i^{*}\leq n$, the user randomly generates an ${[D+R,L+R]}$ MDS code, and takes $\mathrm{G}_{n+1}$ to be the generator matrix of this code. 
For the case of $i^{*}=n+1$, the user constructs an ${[D+R,L+R]}$ MDS code using the same technique as in the step 1 of the Specialized MDS Code protocol of~\cite{EHS2021Joint}, %for PLT with joint privacy (for details, see~\cite[Section~VI]{EHS2021Joint}), 
except where $K$ is replaced by $D+R$, and $\mathrm{W}$ is replaced by a randomly chosen $D$-subset of $\{1,\dots,D+R\}$, say, $\{h_1,\dots,h_D\}$. 
%where the set of $K$ messages is replaced by a randomly chosen subset of $D+R$ messages including those indexed by $\mathrm{W}$, and the indices of these $D+R$ messages are mapped to $1,\dots,D+R$. 
The user then uses the generator matrix of the constructed MDS code as $\mathrm{G}_{n+1}$. %(For more details, see~\cite{EHS2021Joint}.) 

%Let $\{i_1,\dots,i_{D}\}$ be the index set of the columns of $\mathrm{B}_{n+1}$ that correspond to the messages indexed by $\mathrm{W}$.   

%Next, the user generates a permutation $\pi$, and constructs a matrix $\mathrm{G}$ by permuting the columns of the matrix $\mathrm{B}$ according to the inverse permutation $\pi^{-1}$.  
Next, the user constructs a permutation $\pi$.
%Let $\mathrm{W} = \{l_1,\dots,l_{D}\}$, and let $\mathcal{K}\setminus \mathrm{W} = \{l_{D+1},\dots,l_{K}\}$. 
For the case of ${1\leq i^{*}\leq n}$, the permutation $\pi$ is generated exactly the same as in the case (i), whereas the construction of the permutation $\pi$ for the case of $i^{*}=n+1$ is different from that in the case (i). 
Similarly as before, let $\tilde{\mathrm{W}}=\{l_1,\dots,l_{D}\}$, and let $\mathcal{K}\setminus \mathrm{W}=\{l_{D+1},\dots,l_{K}\}$. %$\{l_1,\dots,l_{D}\}$ be a random permutation of the elements in $\mathrm{W}$, and let $\{l_{D+1},\dots,l_{K}\}$ be a random permutation of the elements in $\mathcal{K}\setminus \mathrm{W}$. %let $\mathrm{W} = \{l_1,\dots,l_{D}\}$, and let ${\mathcal{K}\setminus \mathrm{W} = \{l_{D+1},\dots,l_{K}\}}$. 
For the case of $i^{*}=n+1$, the user constructs a permutation $\pi$ such that: for every ${1\leq j\leq D}$, $\pi(l_j) = nD+h_{j}$; and for every ${D<j\leq K}$, $\pi(l_j)$ is a randomly chosen element from $\mathcal{K}\setminus \{\pi(l_k)\}_{1\leq k<j}$.

\vspace{0.125cm}
\textbf{Step 2:} Given the query $\mathrm{Q}^{[\mathrm{W},\mathrm{V}]}$, i.e., the matrix $\mathrm{G}$ and the permutation $\pi$, the server first constructs the vector $\tilde{\mathrm{X}} \triangleq \pi(\mathrm{X})$ by permuting the components of the vector $\mathrm{X}$ according to the permutation $\pi$, i.e.,  $\tilde{\mathrm{X}}_{\pi(l)} \triangleq \mathrm{X}_l$ for $l\in \mathcal{K}$. Then, the server  
%the vector ${\tilde{\mathrm{X}}=\pi(\mathrm{X})}$, and then 
computes the vector $\mathrm{y}\triangleq \mathrm{G}\tilde{\mathrm{X}}$, and sends $\mathrm{y}$ back to the user as the answer $\mathrm{A}^{[\mathrm{W},\mathrm{V}]}$.

\vspace{0.125cm}
\textbf{Step 3:} Upon receiving the answer $\mathrm{A}^{[\mathrm{W},\mathrm{V}]}$, i.e., the vector $\mathrm{y}$, the user recovers the demand vector $\mathrm{Z}^{[\mathrm{W},\mathrm{V}]}$ as follows. 
%First, the user constructs a vector $\tilde{\mathrm{X}}$ by applying the permutation $\pi$ on the components of the vector $\mathrm{X}$. 
%Note that the matrix $\mathrm{B}$ can be obtained by permuting the columns of the matrix $\mathrm{G}$ according to the same permutation $\pi$. 
%Thus, ${\mathrm{B}\tilde{\mathrm{X}} = \mathrm{y}}$. 
For every $1\leq i\leq n$, let $\mathrm{y}_i$ be the vector $\mathrm{y}$ restricted to its components indexed by $\{(i-1)L+1,\dots,iL\}$, and let $\mathrm{y}_{n+1}$ be the vector $\mathrm{y}$ restricted to its components indexed by $\{nL+1,\dots,nL+mL\}$. 
For the case of $1\leq i^{*}\leq n$, the demand $\mathrm{Z}^{[\mathrm{W},\mathrm{V}]}$ can be recovered from the vector $\mathrm{y}_{i^{*}}$ for both cases (i) and (ii). 
For the case of ${i^{*}=n+1}$, the demand $\mathrm{Z}^{[\mathrm{W},\mathrm{V}]}$ can be recovered %using a similar technique as in the step 3 of the Specialized MDS Code protocol of~\cite{EHS2021Joint}, 
by performing proper row-block or row operations on the augmented matrix $[\mathrm{G}_{n+1},\mathrm{y}_{n+1}]$ for the case (i) or (ii), respectively. 

\begin{lemma}\label{lem:IPLT-Ach}
The GPC-PIA protocol is an IPLT protocol, and achieves the rate ${(\lfloor \frac{K}{D}\rfloor+\min\{\frac{R}{S},\frac{R}{L}\})^{-1}}$, where ${R\triangleq K \pmod D}$ and $S\triangleq \gcd(D+R,R)$.
\end{lemma}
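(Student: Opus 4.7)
The plan is to verify three properties: that the rate of the GPC-PIA protocol equals the claimed value, that the user can recover the demand (recoverability), and that the individual privacy condition $\Pr(i\in\mathbf{W}\mid\mathbf{Q}=\mathrm{Q})=D/K$ holds for every $i\in\mathcal{K}$ and every realized query. I treat the two cases $L\le S$ and $L>S$ in parallel. For the rate, I count the rows of $\mathrm{G}$: case~(i) gives $L(n+m)=L\lfloor K/D\rfloor+LR/S$ and case~(ii) gives $L\lfloor K/D\rfloor+R$. Each block $\mathrm{G}_j$ for $j\le n$ is MDS and therefore full row-rank, and $\mathrm{G}_{n+1}$ is full row-rank by construction (in case~(i) because the block-diagonal piece $\mathrm{B}_2$ is built from MDS sub-blocks $\mathrm{C}_{t+k}$ of size $L\times S$ with $L\le S$; in case~(ii) because $\mathrm{G}_{n+1}$ is an $[D+R,L+R]$ MDS generator). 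Since the messages are i.i.d.\ uniform, $H(\mathbf{A})$ equals the number of rows of $\mathrm{G}$ times $\theta$, so the rate $L\theta/H(\mathbf{A})$ matches the target formula.

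For recoverability, I split on $i^*$. If $1\le i^*\le n$ (common to both cases), the construction of $\pi$ places $\mathrm{X}_{l_1},\dots,\mathrm{X}_{l_D}$ at positions $(i^*-1)D+1,\dots,i^*D$ of $\tilde{\mathrm{X}}$, so $\mathrm{y}_{i^*}=\mathrm{G}_{i^*}\,\tilde{\mathrm{X}}_{\text{block }i^*}=\tilde{\mathrm{V}}\,\mathrm{X}_{\tilde{\mathrm{W}}}=\mathrm{V}\,\mathrm{X}_{\mathrm{W}}$, the demand. If $i^*=n+1$ in case~(i), the user performs row-block operations on $[\mathrm{G}_{n+1},\mathrm{y}_{n+1}]$: multiply the $(k_l-t)$th row-block by $c_{k_l}$ for $l=1,\dots,s$ and sum. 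By the defining equations $\mathrm{M}_1\mathrm{c}=\mathbf{0}$ and $\mathrm{M}_2\mathrm{c}=\mathbf{1}$ together with $\alpha_{k_l}=1/c_{k_l}$, this collapses $\mathrm{G}_{n+1}$ to an $L\times(D+R)$ matrix whose column-blocks outside $\mathcal{I}$ vanish and whose column-blocks indexed by $\mathcal{I}=\{i_1,\dots,i_{t+1}\}$ are $\tilde{\mathrm{V}}_1,\dots,\tilde{\mathrm{V}}_{t+1}$; the assignment of $\pi$ ensures the corresponding entries of $\tilde{\mathrm{X}}$ equal $\mathrm{X}_{\tilde{\mathrm{W}}}$, so $\mathrm{V}\,\mathrm{X}_{\mathrm{W}}$ is again recovered. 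For $i^*=n+1$ in case~(ii), $\mathrm{G}_{n+1}$ is a Specialized MDS generator in the sense of~\cite{EHS2021Joint}, and the recoverability argument there applies verbatim after replacing $(K,\mathrm{W})$ by $(D+R,\{h_1,\dots,h_D\})$.

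For privacy, since $\pi$ is part of $\mathrm{Q}$, the condition is equivalent to $\Pr(\pi(i)\text{ is a demand position}\mid\mathbf{Q})=D/K$, where demand positions are determined by $i^*$ and (when $i^*=n+1$) by $\{i_1,\dots,i_{t+1}\}$ in case~(i) or $\{h_1,\dots,h_D\}$ in case~(ii). The proof rests on two invariance claims: (A)~the marginal distribution of $\mathrm{G}$ is the same for every value of $i^*\in\{1,\dots,n+1\}$, and (B)~conditional on $i^*=n+1$ and $\mathbf{Q}$, each position in block $n+1$ is a demand position with probability exactly $D/(D+R)$. Among $i^*\le n$, claim (A) is immediate since $\mathrm{G}_1,\dots,\mathrm{G}_n$ are i.i.d.\ uniform over MDS matrices (using that $\tilde{\mathrm{V}}$ is itself uniform over $\mathscr{V}$). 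Granting (A) and (B), Bayes yields $\Pr(i^*=j\mid\mathbf{Q})=D/K$ for $j\le n$ and $(D+R)/K$ for $j=n+1$, so for $\pi(i)$ in a block $j\le n$, $\Pr(\pi(i)\in\text{demand}\mid\mathbf{Q})=\Pr(i^*=j\mid\mathbf{Q})=D/K$, and for $\pi(i)$ in block $n+1$, $\Pr(\pi(i)\in\text{demand}\mid\mathbf{Q})=\tfrac{D+R}{K}\cdot\tfrac{D}{D+R}=\tfrac{D}{K}$; in case~(i), this uses the identity $(t+1)/(t+m)=(D/S)/((D+R)/S)=D/(D+R)$.

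The main obstacle is establishing (A) between the regimes $i^*\le n$ and $i^*=n+1$, together with (B), in case~(i), where the construction of $\mathrm{G}_{n+1}$ couples $\tilde{\mathrm{V}}$ into the Cauchy-structured matrix through the linear system determining $\mathrm{c}$ and the $\alpha$'s. I must verify that (1)~embedding the column-partition of the uniform MDS $\tilde{\mathrm{V}}$ into $t+1$ uniformly random column-block positions of $\mathrm{C}$ and extending uniformly over MDS completions still gives $\mathrm{C}$ a uniform MDS distribution, and (2)~the $\alpha$'s pinned down by $\mathrm{M}_1\mathrm{c}=\mathbf{0},\ \mathrm{M}_2\mathrm{c}=\mathbf{1}$ (with $c_{k_1}$ drawn uniformly from $\mathbb{F}_p\setminus\{0\}$) are jointly distributed as independent uniform nonzero elements. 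Super-regularity of Cauchy matrices guarantees that this system has a unique all-nonzero solution for every realization of $\{x_i\},\{y_j\}$, so the randomness can be traced cleanly to the free uniform draws of $c_{k_1}$ and of the unconstrained $\alpha$'s. For case~(ii) the analogous statements are simpler: the joint-privacy analysis of the Specialized MDS Code protocol in~\cite{EHS2021Joint} shows that $\mathrm{G}_{n+1}$ is a uniformly random $[D+R,L+R]$ MDS generator in both sub-cases, so (A) is immediate, and (B) follows because that same analysis hides the support $\{h_1,\dots,h_D\}$ within block $n+1$.
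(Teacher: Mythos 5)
Your overall structure (rate, recoverability, privacy via Bayes) mirrors the paper's, and your recoverability and rate arguments are consistent with the paper's (which, notably, treats only case~(i) and declares case~(ii) ``similar''; you treat both). The privacy argument is organized differently, though: the paper conditions directly on $\mathbf{W}=\mathrm{W}_j$, asserts that $\mathbf{G}$ is independent of $(\mathbf{W},\boldsymbol{\pi})$, and then computes $\Pr(\boldsymbol{\pi}=\pi\mid\mathbf{W}=\mathrm{W}_j)$ explicitly (arriving at the factor $\frac{D}{K}\cdot\frac{1}{D!}\cdot\frac{1}{(K-D)!}$ and, for $\pi(i)>nD$, summing over the $s=\binom{t+m-1}{t}$ of the $r=\binom{t+m}{t+1}$ candidate support sets, with $s/r=D/(D+R)$ playing the role of your claim~(B)); you instead introduce $i^*$ as a latent variable and split the argument into invariance claims (A) and (B). This reorganization is legitimate, but as stated claim~(A) is too weak for the Bayes step $\Pr(i^*=j\mid\mathbf{Q})=\Pr(i^*=j)$: you need $\Pr(\mathbf{Q}\mid i^*=j)$ to be constant in $j$, and since $\mathbf{Q}=(\mathbf{G},\boldsymbol{\pi})$, matching marginals of $\mathbf{G}$ alone do not suffice --- you must additionally argue that $\boldsymbol{\pi}\mid i^*$ is uniform over permutations for every $i^*$ (true, but needs stating) and that $\mathbf{G}\perp\boldsymbol{\pi}\mid i^*$ (nontrivial precisely in the $i^*=n+1$ subcase of case~(i), where both $\mathbf{G}_{n+1}$ and $\boldsymbol{\pi}$ are built from the same random index set $\{i_1,\dots,i_{t+1}\}$). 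Strengthening (A) to a statement about the joint distribution of $\mathbf{Q}$ given $i^*$ closes this. You do correctly identify the real load-bearing distributional claims --- that embedding the uniform-MDS $\tilde{\mathrm{V}}$ at random column-block positions and solving for the constrained $\alpha$'s via the super-regular Cauchy system yields the same law for $\mathbf{G}_{n+1}$ as in the $i^*\le n$ branch --- which the paper asserts in a single sentence (``$\mathbf{G}$ is independent of $(\mathbf{W},\boldsymbol{\pi})$'') without detailed verification; neither your sketch nor the paper fully discharges this, so your proposal is not weaker than the published proof on that front.
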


%\begin{proof}
%The proof is omitted due to the lack of space, and can be found in~\cite{EHS2021Individual}. 
%\end{proof}

\begin{proof}
To avoid repetition, we only present the proof for the case (i). Using similar arguments, the results can be proven for the case (ii).  

In the case (i), it is easy to see that the rate of the protocol is ${L\theta/(L(n+m)\theta)} = {(n+m)^{-1}} = (\lfloor \frac{K}{D}\rfloor+\frac{R}{S})^{-1}$. This is because the matrix $\mathrm{G}$ has $L(n+m)$ rows, and the vector $\mathrm{y}=\mathrm{G}\tilde{\mathrm{X}}$ contains $L(n+m)$ independently and uniformly distributed components, each with entropy $\theta$. From the construction, it should also be obvious that the recoverability condition is satisfied. 

Next, we show that the individual privacy condition is satisfied. 
Let $\tilde{\mathrm{X}} \triangleq [X_{i_1},\dots,X_{i_K}]^{\mathsf{T}}$. 
% \triangleq \{i_1,\dots,i_D\}}, \dots, {\mathcal{I}_n\triangleq \{i_{(n-1)D+1},\dots,i_{nD}\}}
For every ${1\leq j\leq n}$, let $\mathcal{I}_j$ be the set of $j$th group of $D$ elements in $\{i_1,\dots,i_{nD}\}$, and for every $1\leq j\leq t+m$, let $\mathcal{I}_{n+j}$ be the set of $j$th group of $S$ elements in $\{i_{nD+1},\dots,i_K\}$. For any positive integers $a,b$ such that $b\leq a$, we denote by $C_{a,b}$ the binomial coefficient $\binom{a}{b}$.
For every $1\leq j\leq n$, let $\mathrm{W}_j\triangleq \mathcal{I}_j$, and for every $1\leq j\leq r\triangleq C_{t+m,t+1}$, let $\mathrm{W}_{n+j} = \cup_{k\in \mathcal{J}_j} \mathcal{I}_{k}$, where $\mathcal{J}_1,\dots,\mathcal{J}_{r}$ are all $(t+1)$-subsets of ${\{n+1,\dots,n+t+m\}}$.
Note that $\mathrm{W}_1,\dots,\mathrm{W}_n,\mathrm{W}_{n+1},\dots,\mathrm{W}_{n+t+m}$ are the only possible demand's support index sets from the perspective of the server, given the user's query. 
%the index set of the messages in the $i$th group of $D$ components of $\tilde{\mathrm{X}}$, and for every $n+1\leq i\leq n+t+m$, let  
%$\mathcal{I}_{i}$ be the . 
%$\mathcal{I}_{n+j}\triangleq \{i_{nD+(j-1)(t+1)+1},\dots,i_{nD+(j-1)(t+1)+S}\}$. 
%Let $\sigma$ be the inverse of the permutation $\pi$, i.e., %$\sigma(i) = j$ if and only if $\pi(j) = i$. Note that $\tilde{\mathrm{X}} = [X_{\sigma(1)},\dots,X_{\sigma(K)}]^{\mathsf{T}}$. 
%Let $\triangleq [\sigma(1),\dots,\sigma(K)]$. 
%For every $1\leq i\leq n$, let $\mathrm{W}_i$ be the index set of the $i$th group of $D$ messages in the vector $\tilde{\mathrm{X}}$. 
%let $\mathcal{I}_i\triangleq \{\sigma((i-1)D+1),\dots,\sigma(iD)\}$. 
%Let $\{\}$
%To prove that individual privacy is satisfied, 

For the ease of notation, let ${\mathrm{Q} \triangleq \{\mathrm{G},\pi\}}$ be the user's query. 
To prove that the individual privacy condition is satisfied, we need to show that ${\Pr(i\in \mathbf{W}|\mathbf{Q}=\mathrm{Q})} = {\Pr(i\in \mathbf{W})}$ for every $i\in \mathcal{K}$, or equivalently, ${\Pr(i\in \mathbf{W}|\mathbf{Q}=\mathrm{Q})}$ is the same for all $i\in \mathcal{K}$. 
Consider an arbitrary $i\in \mathcal{K}$. There are two different cases: (i) ${\pi(i)\leq nD}$, and (ii) ${\pi(i)>nD}$. 

First, consider the case (i). In this case, there exists a unique $j$ (for any ${1\leq j\leq n}$) such that $i\in \mathrm{W}_j$. Thus, ${\Pr(i\in \mathbf{W}|\mathbf{Q}=\mathrm{Q})} = {\Pr(\mathbf{W}=\mathrm{W}_j|\mathbf{Q}=\mathrm{Q})}$. 
By applying Bayes' rule, we have
\begin{align}
& \Pr(\mathbf{W}=\mathrm{W}_j|\mathbf{Q}=\mathrm{Q}) \nonumber \\ 
& = \frac{\Pr(\mathbf{Q}=\mathrm{Q}|\mathbf{W}=\mathrm{W}_j)}{\Pr(\mathbf{Q}=\mathrm{Q})}\Pr(\mathbf{W}=\mathrm{W}_j)\nonumber \\
& = \frac{\Pr(\mathbf{Q}=\mathrm{Q}|\mathbf{W}=\mathrm{W}_j)}{\Pr(\mathbf{Q}=\mathrm{Q})}\times \frac{1}{C_{K,D}}.\label{eq:13}
\end{align}
The structure of $\mathrm{G}$---the size and the position of the blocks $\mathrm{G}_1,\dots,\mathrm{G}_{n+1}$---does not depend on $(\mathrm{W},{\pi})$, and the matrix $\mathrm{V}$ and all other MDS matrices used in the construction of $\mathrm{G}$ are generated independently from $(\mathrm{W},{\pi})$. 
This implies that $\mathbf{G}$ is independent of $(\mathbf{W},\boldsymbol{\pi})$. Thus, $\Pr(\mathbf{Q}=\mathrm{Q}) = {\Pr(\mathbf{G}=\mathrm{G},\boldsymbol{\pi}=\pi)}=\Pr(\mathbf{G}=\mathrm{G})\Pr(\boldsymbol{\pi}=\pi)$, and 
\begin{align}
& \frac{\Pr(\mathbf{Q}=\mathrm{Q}|\mathbf{W}=\mathrm{W}_j)}{\Pr(\mathbf{Q}=\mathrm{Q})} \nonumber \\
& = \frac{\Pr(\mathbf{G}=\mathrm{G},\boldsymbol{\pi}=\pi|\mathbf{W}=\mathrm{W}_j)}{\Pr(\mathbf{G}=\mathrm{G})\Pr(\boldsymbol{\pi}=\pi)} \nonumber \\ 
& = \frac{\Pr(\mathbf{G} = \mathrm{G})\Pr(\boldsymbol{\pi}=\pi|\mathbf{W}=\mathrm{W}_j)}{\Pr(\mathbf{G}=\mathrm{G})\Pr(\boldsymbol{\pi}=\pi)} \nonumber \\ 
& = \frac{\Pr(\boldsymbol{\pi}=\pi|\mathbf{W}=\mathrm{W}_j)}{\Pr(\boldsymbol{\pi}=\pi)} \label{eq:14}
\end{align} 
Given $\mathbf{W} = \mathrm{W}_j$, the conditional probability of the event of $\boldsymbol{\pi}=\pi$ is equal to the joint probability of the events of ${\boldsymbol{\pi}(\mathbf{W}) = \pi(\mathrm{W}_j)}$ and ${\boldsymbol{\pi}(\mathcal{K}\setminus \mathbf{W}) = \pi(\mathcal{K}\setminus\mathrm{W}_j)}$. 
Note that $\Pr(\boldsymbol{\pi}(\mathbf{W}) = \pi(\mathrm{W}_j)) = \Pr(\boldsymbol{i^{*}} = j)\times \frac{1}{D!}=\frac{D}{K}\times \frac{1}{D!}$, where $i^{*}$ is the index of the block selected in the step 1 of the protocol, and 
${\Pr(\boldsymbol{\pi}(\mathcal{K}\setminus \mathbf{W}) = \pi(\mathcal{K}\setminus \mathrm{W}_j))} = \frac{1}{(K-D)!}$ by the construction of the permutation $\pi$ in the step 1 of the protocol. Thus, 
\begin{equation}\label{eq:15}
\Pr(\boldsymbol{\pi}=\pi|\mathbf{W}=\mathrm{W}_j) = \frac{D}{K}\times\frac{1}{D!}\times \frac{1}{(K-D)!}.    
\end{equation}
Combining~\eqref{eq:13}-\eqref{eq:15}, we have
\begin{equation}\label{eq:16}
\Pr(i\in \mathbf{W}|\mathbf{Q}=\mathrm{Q}) =
\frac{1}{\Pr(\boldsymbol{\pi}=\pi)}\times\frac{D}{K}\times\frac{1}{K!}.
\end{equation}

Now, consider the case (ii). In this case, there exist $s\triangleq C_{t+m-1,t}$ distinct indices $j_1,\dots,j_{s}$ (${1\leq j_1,\dots,j_s\leq r}$) such that ${i\in \mathrm{W}_{n+j_1}, \dots, i\in \mathrm{W}_{n+j_s}}$. %Thus, ${\Pr(i\in \mathbf{W}|\mathbf{Q}=\mathrm{Q})} = {\sum_{1\leq k\leq s}\Pr(\mathbf{W}=\mathrm{W}_{n+j_k}|\mathbf{Q}=\mathrm{Q})}$. 
Using similar arguments as those in the case (i), we have
\begin{align}
& {\Pr(i\in \mathbf{W}|\mathbf{Q}=\mathrm{Q})} \nonumber\\
& = {\sum_{1\leq k\leq s}\Pr(\mathbf{W}=\mathrm{W}_{n+j_k}|\mathbf{Q}=\mathrm{Q})}\nonumber\\
&  = \sum_{1\leq k\leq s} \frac{\Pr(\mathbf{Q}=\mathrm{Q}|\mathbf{W}=\mathrm{W}_{n+j_k})}{\Pr(\mathbf{Q}=\mathrm{Q})}\Pr(\mathbf{W}=\mathrm{W}_{n+j_k}) \nonumber\\
&  = \sum_{1\leq k\leq s} \frac{\Pr(\mathbf{G} = \mathrm{G})\Pr(\boldsymbol{\pi}=\pi|\mathbf{W}=\mathrm{W}_{n+j_k})}{\Pr(\mathbf{G}=\mathrm{G})\Pr(\boldsymbol{\pi}=\pi)}\times\frac{1}{C_{K,D}}\nonumber\\
&  = \frac{1}{\Pr(\boldsymbol{\pi}=\pi)} \sum_{1\leq k\leq s} \Pr(\boldsymbol{\pi}=\pi|\mathbf{W}=\mathrm{W}_{n+j_k})\times\frac{1}{C_{K,D}}\nonumber\\
&  = \frac{1}{\Pr(\boldsymbol{\pi}=\pi)} \sum_{1\leq k\leq s} \left(\frac{D+R}{K}\times \frac{1}{r}\times\frac{1}{D!}\times\frac{1}{(K-D)!}\right)\frac{1}{C_{K,D}}\nonumber\\
&  = \frac{1}{\Pr(\boldsymbol{\pi}=\pi)} \times s \left(\frac{D+R}{K}\times \frac{1}{r}\times\frac{1}{D!}\times\frac{1}{(K-D)!}\right)\frac{1}{C_{K,D}}\nonumber\\
& = \frac{1}{\Pr(\boldsymbol{\pi}=\pi)}\times \frac{s}{r}\times \frac{D+R}{K}\times \frac{1}{K!}\nonumber\\
& = \frac{1}{\Pr(\boldsymbol{\pi}=\pi)}\times \frac{D}{D+R}\times \frac{D+R}{K}\times \frac{1}{K!}\nonumber\\
& = \frac{1}{\Pr(\boldsymbol{\pi}=\pi)}\times \frac{D}{K}\times \frac{1}{K!}.\label{eq:17}
\end{align} 
Comparing~\eqref{eq:16} and~\eqref{eq:17}, it follows that ${\Pr(i\in \mathbf{W}|\mathbf{Q}=\mathrm{Q})}$ is the same for all $i\in \mathcal{K}$, as was to be shown. 
\end{proof}

%%%%%%%%%%%%%%%%%%%%%%%%%%%%%%%%%%%%%%%%%%%%%%%%%%%%

\section{Examples of the GPC-PIA Protocol}\label{sec:EX}
%\textcolor{red}{Decide filed size}
In this section, we provide two illustrative examples of the GPC-PIA protocol. Example~1 corresponds to a scenario with $L\leq S$, and Example~2 corresponds to a scenario with $L>S$. % to explain our scheme for IPLT protocol for the two cases (i) $L\leq{S}$ and (ii) $L>S$. \vspace{0.25cm}
%\noindent{\textbf{Example 2:}}
\begin{example}
\normalfont
%Let $K=20$, $L=3$, and $D=8$.Since $L=3$ and $S=4$, then we are in case (i) .  
%The following example illustrates the scheme of case (i) where $L\leq{S}$. 
Consider  a  scenario  where  the  server has $K=20$ messages, ${X}_1,\dots,{X}_{20}\in\mathbb{F}_{13}$, and the user wishes to compute $L=3$ linear combinations of $D=8$ messages ${X}_2,{X}_4,{X}_5,{X}_7,{X}_8,{X}_{10},{X}_{11},{X}_{12}$, say, 
\begin{align*}
Z_{1} & =7{X}_2+3{X}_4+12{X}_5+10{X}_7+2{X}_8+{X}_{10}+5{X}_{11}+6{X}_{12}	\\
Z_{2} & =3{X}_2+6{X}_4+5{X}_5+12{X}_7+8{X}_8+3{X}_{10}+11{X}_{11}+4{X}_{12} \\
Z_{3} & =5{X}_2+12{X}_4+{X}_5+4{X}_7+6{X}_8+9{X}_{10}+6{X}_{11}+7{X}_{12}
\end{align*}
%$Z_{1}={X}_2+3{X}_4+2{X}_5+7{X}_7+10{X}_8+12{X}_{10}+5{X}_{11}+6{X}_{12}$, $Z_{2}=3{X}_2+6{X}_4+8{X}_5+3{X}_7+12{X}_8+5{X}_{10}+11{X}_{11}+4{X}_{12}$, and $Z_{3}=9{X}_2+12{X}_4+6{X}_5+5{X}_7+4{X}_8+{X}_{10}+6{X}_{11}+7{X}_{12}$. 
For this example, the demand's support index set $\mathrm{W}$ is given by $\mathrm{W}=\{2,4,5,7,8,10,11,12\}$, and the demand's coefficient matrix $\mathrm{V}$ is given by 
\begin{equation*}
\mathrm{V} = 
\begin{bmatrix}
7 & 3 & 12 & 10 & 2 & 1 & 5 & 6\\
3 & 6 & 5 & 12 & 8 & 3 & 11 & 4\\
5 & 12 & 1 & 4 & 6 & 9 & 6 & 7\\
\end{bmatrix}.
\end{equation*} It is easy to verify that the matrix $\mathrm{V}$ is MDS, i.e., every $3\times 3$ submatrix of $\mathrm{V}$ is invertible (over $\mathbb{F}_{13}$). 

Let $\tilde{\mathrm{W}}$ be a sequence of length $8$ obtained by randomly permuting the elements in $\mathrm{W}$, for example, $\tilde{\mathrm{W}}=\{10,4,8,2,7,5,11,12\}$, %, $\mathcal{K}\setminus \mathrm{W} = \{1,3,6,9,13,\dots,20\}$, 
and let $\tilde{\mathrm{V}}$ be a $3\times 8$ matrix obtained by applying the same permutation on the columns of the matrix $\mathrm{V}$, i.e., %the demand's coefficient matrix $\mathrm{V}$ is given by
\begin{equation*}
\tilde{\mathrm{V}} = 
\begin{bmatrix}
1 & 3 & 2 & 7 & 10 & 12 & 5 & 6\\
3 & 6 & 8 & 3 & 12 & 5 & 11 & 4\\
9 & 12 & 6 & 5 & 4 & 1 & 6 & 7\\
\end{bmatrix},
\end{equation*} 
%$n=\lfloor \frac{20}{6}\rfloor-1=2$
%$m\triangleq \frac{2}{2}+1=2$
%$t\triangleq \frac{6}{2}-1=2$
Note that $\mathrm{V}\mathrm{X}_{\mathrm{W}} = \tilde{\mathrm{V}}\mathrm{X}_{\tilde{\mathrm{W}}}$ by the construction of $\tilde{\mathrm{W}}$ and $\tilde{\mathrm{V}}$. 

For this example, $R= K \pmod D =4$, $S=\gcd(D+R,R)=4$, $n=\lfloor \frac{K}{D}\rfloor-1=1$, $m=\frac{R}{S}+1=2$, and $t=\frac{D}{S}-1=1$. Note that for this example, $L=3<S=4$. %, computed by using the corresponding formulas as specified in step 1 of the protocol. 

The query of the user consists of a $9\times20$ matrix $\mathrm{G}$ and a permutation $\pi$, constructed as follows. 
The matrix $\mathrm{G}$ consists of two submatrices (blocks) $\mathrm{G}_1$ and $\mathrm{G}_2$ of size $3\times8$ and $6\times12$, respectively, i.e., 
\begin{equation}\label{eq:14}
\mathrm{G} = 
\begin{bmatrix}
\mathrm{G}_1 & 0_{3\times 12}\\
0_{6\times 8} & \mathrm{G}_2
\end{bmatrix},
\end{equation} where the construction of the blocks $\mathrm{G}_1$ and $\mathrm{G}_2$ is described below.
%as follows. 

The user randomly selects one of the blocks $\mathrm{G}_1,\mathrm{G}_2$, where the probability of selecting the block $\mathrm{G}_1$ is $\frac{8}{20}$, and the probability of selecting the block $\mathrm{G}_2$ is $\frac{12}{20}$. 
Depending on $\mathrm{G}_1$ or $\mathrm{G}_2$ being selected, the construction of each of these blocks is different. In this example, we consider the case that the user selects $\mathrm{G}_2$. %The user then constructs $\mathrm{G}_2$ first. 
 %, i.e., $i^{*}=2$. 
\begin{comment}
\begin{equation*}
\mathrm{\mathrm{G}_{2}} &= \begin{bmatrix}
    5 & 8 & 4 & 7 & 4 & 3\\
    7 & 4 & 12 & 9 & 1 & 10\\
    2 & 2 & 10 & 6 & 10 & 3\\
\end{bmatrix}.
\end{equation*}
\end{comment}
\begin{comment}
The construction of $\mathrm{G}_{n+1}$ is as follows.
First, the user randomly generates an MDS matrix $\mathrm{C}$ of size $L\times (D+R)$, and partitions the columns of $\mathrm{C}$ into $t+m$ column-blocks each of size $L\times S$, i.e., $\mathrm{C} = [\mathrm{C}_1,\dots,\mathrm{C}_{t+m}]$. 
%Note that $\mathrm{C}_i$'s are matrices of size $L\times S$.
Let $\alpha_1,\dots,\alpha_{t+m}$ be $t+m$ randomly chosen elements from $\mathbb{F}_p\setminus \{0\}$. For every $1\leq i\leq m$ and $1\leq j\leq t$, let $\omega_{i,j}\triangleq (x_i-y_j)^{-1}$, where $x_1,\dots,x_m$ and $y_1,\dots,y_t$ are $t+m$ distinct elements chosen at random from $\mathbb{F}_p$. 
Then, the user constructs ${\mathrm{G}_{n+1} = [\mathrm{B}_{1},\mathrm{B}_{2}]}$, where 
\end{comment}
In this case, the user takes $\mathrm{G}_1$ to be a randomly generated MDS matrix of size $3\times 8$, say, % (from the ensemble of all MDS matrices of size $3\times 8$), say   
\begin{equation}\label{eq:15}
\mathrm{G}_{1} = \begin{bmatrix}
    5 & 8 & 4 & 7 & 4 & 3 & 4 & 2\\
    7 & 4 & 12 & 9 & 1 & 10 & 6 & 5\\
    2 & 2 & 10 & 6 & 10 & 3 & 9 & 6
\end{bmatrix}.
\end{equation}
To construct $\mathrm{G}_{2}$, the user first constructs a $3\times 12$ matrix $\mathrm{C} = [\mathrm{C}_1,\mathrm{C}_2,\mathrm{C}_{3}]$, where the column-blocks $\mathrm{C}_1,\mathrm{C}_2,\mathrm{C}_3$, each of size $3\times 4$, are %into $3$ column-blocks each of size $3\times 4$, i.e., $\mathrm{C} = [\mathrm{C}_1,\mathrm{C}_2,\mathrm{C}_{3}]$. The construction of $\mathrm{C}$ is 
constructed as follows. The user partitions the columns of $\tilde{\mathrm{V}}$ into two column-blocks $\tilde{\mathrm{V}}_1$ and $\tilde{\mathrm{V}}_2$, each of size $3\times 4$, i.e., %$\mathrm{V}$ is partitioned into:
\[
\tilde{\mathrm{V}}_{1} = 
\begin{bmatrix}
    1 & 3 & 2 & 7\\
    3 & 6 & 8 & 3\\
    9 & 12 & 6 & 5
\end{bmatrix}, \quad \quad 
\tilde{\mathrm{V}}_{2} = 
\begin{bmatrix}
    10 & 12 & 5 & 6\\
    12 & 5 & 11 & 4\\
    4 & 1 & 6 & 7
\end{bmatrix}.
\]
\begin{comment}
&
\mathrm{\mathrm{V}_{3}} &= \begin{bmatrix}
    6 & 5\\
    4 & 3\\
    7 & 7\\
\end{bmatrix}.
\end{comment}
The user then randomly chooses two indices $i_1,i_2$ from $\{1,2,3\}$, say, $i_1=2$ and $i_2=3$, and takes $\mathrm{C}_{i_1}=\mathrm{C}_2 = \tilde{\mathrm{V}}_1$, and $\mathrm{C}_{i_2} = \mathrm{C}_{3}=\tilde{\mathrm{V}}_2$.
%and $\mathrm{C}_{4}=\mathrm{V}_3$. 
%In this case, the $j$th column in the $i^{*}$th column-block of $\mathrm{B}$ is said to be associated with the $j$th message index in $\mathrm{W}$.
Next, the user takes the remaining column-block of $\mathrm{C}$, i.e., $\mathrm{C}_1$, to be a randomly generated matrix of size $3\times 4$ such that %randomly generates the rest of column-blocks of $\mathrm{C}$ (for this example it is just the column-block $\mathrm{C}_{1}$) such that 
$\mathrm{C} = [\mathrm{C}_1,\mathrm{C}_2,\mathrm{C}_{3}]$ is an MDS matrix. For this example, suppose $\mathrm{C}_1$ is %Let say the matrix $\mathrm{C}_{1}$, randomly generated by the user, is 
given by
\begin{equation*}
\mathrm{C}_{1}=
\begin{bmatrix}
1 & 9 & 11 & 2\\
7 & 9 & 2 & 9\\
10 & 9 & 11 & 8
\end{bmatrix}.
\end{equation*} It is easy to verify that the matrix $\mathrm{C}$ is MDS. Next, the user constructs $\mathrm{G}_2$ as 
\[
\mathrm{G}_{2} = 
\begin{bmatrix}
\alpha_1\mathrm{C}_1 & \alpha_{2}\mathrm{C}_{2} & 0_{3\times 4}\\
\alpha_1\mathrm{C}_1 & 0_{3\times 4} & \alpha_{3}\mathrm{C}_{3}
\end{bmatrix},
\] where the (scalar) parameters $\alpha_1,\alpha_2,\alpha_3$ are chosen  %properly chosen as follows. 
so that, by performing row-block operations on $\mathrm{G}_{2}$, the user can obtain the matrix $[0_{3\times 4},\mathrm{C}_2,\mathrm{C}_3]$. To do so, 
%construct a $3\times12$ matrix with $3$ column-blocks each of size $3\times 4$ such that: 
%(i) the block indexed by ${\{1,2,3\}\setminus \mathcal{I}}=\{1\}$ is all zero, and 
%(ii) the blocks indexed by $\mathcal{I} = \{2,3\}$ are $\mathrm{C}_{2}$ and $\mathrm{C}_{3}$.
%To perform the row-block operation, the goal of the user is to find the proper coefficients $c_{2}$ and $c_{3}$ such that when multiplied by the first and the second row-blocks of $\mathrm{G}_{2}$, respectively, $\mathrm{c}=[c_{2}, c_{3}]^{\mathsf{T}}$ satisfy (i) and (ii) as explained in step 1 of the IPLT protocol.
%%%%%%%%%%%%%%%%%
%First, 
the user randomly chooses $\alpha_{1}$ form $\mathbb{F}_{13}\setminus{\{0\}}$, say $\alpha_{1}=2$. Note that %, $\alpha_{1}$ is the non-zero multiplier of 
the first column-block of $\mathrm{G}_{2}$, i.e., $\alpha_1 \mathrm{C}_1$, does not contain any column-blocks of $\tilde{\mathrm{V}}$, and hence must be eliminated by row-block operations. 
To perform row-block operations, the user multiplies the first row-block $[\alpha_1\mathrm{C}_1,\alpha_2\mathrm{C}_2,0_{3\times 4}]$ by a scalar $c_2$ and the second row-block $[\alpha_1\mathrm{C}_1,0_{3\times 4},\alpha_3\mathrm{C}_3]$ by a scalar $c_3$, and obtains $[(c_2+c_3)\alpha_1\mathrm{C}_1,c_2\alpha_2\mathrm{C}_2,c_3\alpha_3\mathrm{C}_3]$. 
Thus, the user finds $\alpha_2,\alpha_3$ and $c_3$ such that $c_2+c_3=0$, $c_2\alpha_2=1$, and $c_3\alpha_3=1$ for an arbitrary choice of $c_2\neq 0$, say, $c_2 = 4$.  
%For this example, suppose $c_2=4$. 
Then, $c_3 = -c_2 = 9$, $\alpha_2 = \frac{1}{c_2} = 10$, and $\alpha_3 = \frac{1}{c_3} = 3$.   The user then constructs $\mathrm{G}_2$ as 
\begin{equation}\label{eq:16}
\mathrm{G}_{2} = 
\begin{bmatrix}
2\mathrm{C}_1 & 10\mathrm{C}_{2} & 0_{3\times 4}\\
2\mathrm{C}_1 & 0_{3\times 4} & 3\mathrm{C}_{3}
\end{bmatrix} 
= \begin{bmatrix}
2 & 5 & 9 & 4 & 10 & 4 & 7  & 5 & 0 & 0 & 0 & 0\\
1 & 5 & 4 & 5 & 4 & 8 & 2  & 4 & 0 & 0 & 0 & 0\\	
7 & 5 & 9 & 3 & 12 & 3 & 8  & 11 & 0 & 0 & 0 & 0\\	
2 & 5 & 9 & 4 & 0 & 0 & 0  & 0 & 4 & 10 & 2 & 5\\
1 & 5 & 4 & 5 & 0 & 0 & 0  & 0 & 10 & 2 & 7 & 12\\	
7 & 5 & 9 & 3 & 0 & 0 & 0  & 0 & 12 & 3 & 5 & 8
 \end{bmatrix}.
\end{equation} Combining $\mathrm{G}_1$ and $\mathrm{G}_2$ given by~\eqref{eq:15} and~\eqref{eq:16}, the user then constructs $\mathrm{G} $ as in~\eqref{eq:14}.

Next, the user constructs a permutation $\pi$ on $\{1,\dots,20\}$, using the procedure described in the step 1 of the protocol, and sends the permutation $\pi$ together with the matrix $\mathrm{G}$ to the server. 
Note that the columns $13,\dots,20$ of $\mathrm{G}$ are constructed based on the columns of $\tilde{\mathrm{V}}$, and the columns of $\tilde{\mathrm{V}}$ correspond to the messages $X_{10},X_4,X_8,X_2,X_7,X_5,X_{11},X_{12}$, respectively. % $X_i$ for any ${i\in \{10,4,8,2,7,5,11,12\}}$. %i.e., 
%${i\in \mathrm{W} = \{11,4,8,2,7,10,5,12\}}$, i.e., 
%$X_{10},X_4,X_8,X_2,X_7,X_5,X_{11},X_{12}$. %that constitute the support set of the linear combinations $Z_1,Z_2,Z_3$. 
Thus, the user constructs the permutation $\pi$ such that %for any $i\in \mathrm{W}=\{2,4,5,7,8,10,11,12\}$, %the $\pi(i)$th component of $\tilde{\mathrm{X}}$ is the $i$th component of $\mathrm{X}$.  
%That is, 
$\{\pi(10),\pi(4),\pi(8),\pi(2),\pi(7),\pi(5),\pi(11),\pi(12)\}=\{13,\dots,20\}$. %, $\pi(7)=16$, $\pi(8)=17$, $\pi(10)=18$, $\pi(11)=19$, $\pi(12)=20$. 
For $i\in \{1,\dots,20\}\setminus \{2,4,5,7,8,10,11,12\}$, the user then randomly chooses $\pi(i)$ (subject to the constraint that $\pi$ forms a valid permutation on $\{1,\dots,20\}$). For this example, suppose the user takes $\{\pi(1),\pi(3),\pi(6),\pi(9),\pi(13),\dots,\pi(20)\} = \{3,8,1,2,9,11,5,4,12,6,10,7\}$. 

%For this example, $\mathrm{W}=\{l_{1},\dots,l_{8}\}$ $=\{2,4,5,7,8,10,11,12\}$. 

%Thus, for any $1\leq j \leq8$ , the permutation $\pi$ generates $\pi(l_1),\dots,\pi(l_8)=$ $\pi(2),\pi(4),\pi(5),\pi(7),\pi(8),\pi(10),\pi(11),\pi(12)=13,\dots,20$. 
%For every $8< j\leq 20$ where $\mathcal{K}\setminus \mathrm{W} = \{l_{9},\dots,l_{20}\}=\{1,3,6,9,13,\dots,20\}$, $\pi(l_j)$ is a randomly chosen element from $\mathcal{K}\setminus \{\pi(l_k)\}_{1\leq k<j}=\{1,\dots,12\}$, say $\pi(1)=3$, $\pi(3)=8$, $\pi(6)=1$, $\pi(9)=2$, and $\pi(13),\dots,\pi(20)=9,11,5,4,12,6,10,7$.

%The user then constructs $\tilde{\mathrm{X}}$ by applying the permutation $\pi$ on the vector $\mathrm{X}$, i.e., $\tilde{\mathrm{X}}=\pi(\mathrm{X})$. That is the components of $\tilde{\mathrm{X}}$ indexed by $\pi(2),\pi(4),\pi(5),\pi(7),\pi(8),\pi(10),\pi(11),\pi(12)$ (i.e., $13,\dots,20$) correspond to the components of $\mathrm{X}$ indexed by $2,4,5,7,8,10,11,12$, and the rest of elements of $\tilde{\mathrm{X}}$ indexed by $\pi(1),\pi(3),\pi(6),\pi(9),\pi(13),\dots,\pi(20)$ (i.e., $3,8,1,2,9,11,5,4,12,6,10,7$) correspond to the elements of $\mathrm{X}$ indexed by $1,3,6,9,13,\dots,20$. Now, the user sends the query, i.e., the generator matrix $\mathrm{G}$ and the permutation $\pi$ to the server. 
%=\pi(\mathrm{X})
%from the user

Upon receiving the user's query, i.e., the matrix $\mathrm{G}$ and the permutation $\pi$, the server first permutes the components of the vector $\mathrm{X} = [X_1,\dots,X_{20}]^{\mathsf{T}}$ according to the permutation $\pi$ to obtain the vector $\tilde{\mathrm{X}}=\pi(\mathrm{X})$, i.e., $\tilde{X}_{\pi(i)} = X_i$ for $i\in \{1,\dots,20\}$. For this example, the vector $\tilde{\mathrm{X}}$ is given by
\[\tilde{\mathrm{X}} = [X_6,X_9,X_1,X_{16},X_{15},X_{18},X_{20},X_3,X_{13},X_{19},X_{14},X_{17},X_{10},X_4,X_8,X_2,X_{7},X_{5},X_{11},X_{12}]^{\mathsf{T}}.\]
%constructs $\tilde{\mathrm{X}}=\{3,13,8,14,15,1,16,17,2,18,19,20,9,11,5,4,12,6,10,7\}$
%from $\mathrm{X}$ according to the permutation $\pi$ . 
Then the server computes $\mathrm{y=\mathrm{G}\mathrm{\tilde{\mathrm{X}}}}$, and sends the vector $\mathrm{y}$ back to the user as the answer. 
Let $\mathrm{T}_1 = \{1,\dots,8\}$, $\mathrm{T}_2 = \{9,10,11,12\}$, $\mathrm{T}_3 = \{13,14,15,16\}$, and $\mathrm{T}_4 = \{17,18,19,20\}$. 
For any $T\subset \{1,\dots,20\}$, we denote by $\tilde{\mathrm{X}}_{\mathrm{T}}$ the vector $\tilde{\mathrm{X}}$ restricted to its components indexed by $\mathrm{T}$. 
Note that $[\tilde{\mathrm{X}}_{\mathrm{T}_3}^{\mathsf{T}},\tilde{\mathrm{X}}_{\mathrm{T}_4}^{\mathsf{T}}]^{\mathsf{T}} = \mathrm{X}_{\tilde{\mathrm{W}}}$, and $\mathrm{y} = [\mathrm{y}_1^{\mathsf{T}},\mathrm{y}_2^{\mathsf{T}}]^{\mathsf{T}}$, where $\mathrm{y}_1 \triangleq \mathrm{G}_1 \tilde{\mathrm{X}}_{\mathrm{T}_1}$, and $\mathrm{y}_2 \triangleq \mathrm{G}_2 [\tilde{\mathrm{X}}_{\mathrm{T}_2}^{\mathsf{T}},\tilde{\mathrm{X}}_{\mathrm{T}_3}^{\mathsf{T}},\tilde{\mathrm{X}}_{\mathrm{T}_4}^{\mathsf{T}}]^{\mathsf{T}}$. 
Let $\mathrm{I}$ be the identity matrix of size $3\times 3$. 
Then, the user recovers $[Z_1,Z_2,Z_3]^{\mathsf{T}} = \mathrm{V}\mathrm{X}_{\mathrm{W}} = \tilde{\mathrm{V}}\mathrm{X}_{\tilde{\mathrm{W}}}$ by computing %$[c_2\mathrm{I},c_3\mathrm{I}]\mathrm{y}_2$, where $c_2=4$, $c_3=9$, and $\mathrm{I}$ is an $3\times 3$ identity matrix, i.e., 
\begin{align*}
& \begin{bmatrix} c_2\mathrm{I} & c_3\mathrm{I} \end{bmatrix}\mathrm{y}_2 = \begin{bmatrix} c_2\mathrm{I} & c_3\mathrm{I} \end{bmatrix} \mathrm{\mathrm{G}}_{2} \begin{bmatrix} \tilde{\mathrm{X}}_{\mathrm{T}_2}\\ \tilde{\mathrm{X}}_{\mathrm{T}_3} \\ \tilde{\mathrm{X}}_{\mathrm{T}_4}\end{bmatrix} 
= \begin{bmatrix} c_2\mathrm{I} & c_3\mathrm{I} \end{bmatrix} 
\begin{bmatrix}
2\mathrm{C}_1 & 10\mathrm{C}_{2} & 0\\
2\mathrm{C}_1 & 0 & 3\mathrm{C}_{3}
\end{bmatrix} \begin{bmatrix} \tilde{\mathrm{X}}_{\mathrm{T}_2}\\ \tilde{\mathrm{X}}_{\mathrm{T}_3}\\ \tilde{\mathrm{X}}_{\mathrm{T}_4}\end{bmatrix}\\
& \quad =  
\begin{bmatrix} 
2(c_2+c_3) \mathrm{C}_1 & 10c_2 \mathrm{C}_2 & 3c_3\mathrm{C}_3 
\end{bmatrix} 
\begin{bmatrix} \tilde{\mathrm{X}}_{\mathrm{T}_2}\\ \tilde{\mathrm{X}}_{\mathrm{T}_3}\\ \tilde{\mathrm{X}}_{\mathrm{T}_4}\end{bmatrix} 
= \begin{bmatrix}
0 & \mathrm{C}_{2} & \mathrm{C}_{3}\\
\end{bmatrix}\begin{bmatrix} \tilde{\mathrm{X}}_{\mathrm{T}_2}\\ \tilde{\mathrm{X}}_{\mathrm{T}_3}\\ \tilde{\mathrm{X}}_{\mathrm{T}_4}\end{bmatrix}\\
& \quad  = \begin{bmatrix}
0 & \tilde{\mathrm{V}}_{1} & \tilde{\mathrm{V}}_{2} \\
\end{bmatrix} \begin{bmatrix} \tilde{\mathrm{X}}_{\mathrm{T}_2}\\ \tilde{\mathrm{X}}_{\mathrm{T}_3}\\ \tilde{\mathrm{X}}_{\mathrm{T}_4}\end{bmatrix}= \begin{bmatrix}
\tilde{\mathrm{V}}_{1} & \tilde{\mathrm{V}}_{2}
\end{bmatrix} 
\begin{bmatrix}\tilde{\mathrm{X}}_{\mathrm{T}_3} \\ \tilde{\mathrm{X}}_{\mathrm{T}_4} \end{bmatrix}
 = \tilde{\mathrm{V}}\mathrm{X}_{\tilde{\mathrm{W}}},
\end{align*} noting that $c_2 = 4$ and $c_3 = 9$, and hence, $c_2+c_3 = 0$, $10c_2 = 1$, and $3c_3=1$.

\end{example}
%%%%%%%%%%%%%%%%%%%%%%%%%%%%%%%%%%%%%%%%%%%%
%\noindent{\textbf{Example 1:}}
\begin{example}
\normalfont 
Consider  a  scenario  where  the  server has $K=20$ messages, ${X}_1,\dots,{X}_{20}\in\mathbb{F}_{13}$, and the user wishes to compute $L=3$ linear combinations of $D=6$ messages ${X}_2,{X}_4,{X}_5,{X}_7,{X}_8,{X}_{10}$, say, 
\begin{align*}
Z_{1} & =7{X}_2+3{X}_4+12{X}_5+10{X}_7+2{X}_8+{X}_{10}	\\
Z_{2} & =3{X}_2+6{X}_4+5{X}_5+12{X}_7+8{X}_8+3{X}_{10}\\
Z_{3} & =5{X}_2+12{X}_4+{X}_5+4{X}_7+6{X}_8+9{X}_{10}
\end{align*}
%$Z_{1}={X}_2+3{X}_4+2{X}_5+7{X}_7+10{X}_8+12{X}_{10}+5{X}_{11}+6{X}_{12}$, $Z_{2}=3{X}_2+6{X}_4+8{X}_5+3{X}_7+12{X}_8+5{X}_{10}+11{X}_{11}+4{X}_{12}$, and $Z_{3}=9{X}_2+12{X}_4+6{X}_5+5{X}_7+4{X}_8+{X}_{10}+6{X}_{11}+7{X}_{12}$. 

For this example, the demand's support index set $\mathrm{W}$ is given by $\mathrm{W}=\{2,4,5,7,8,10\}$, and the demand's coefficient matrix $\mathrm{V}$ is given by 
\begin{equation*}
\mathrm{V} = 
\begin{bmatrix}
7 & 3 & 12 & 10 & 2 & 1\\
3 & 6 & 5 & 12 & 8 & 3\\
5 & 12 & 1 & 4 & 6 & 9
\end{bmatrix}.
\end{equation*} It is easy to verify that the matrix $\mathrm{V}$ is MDS, i.e., every $3\times 3$ submatrix of $\mathrm{V}$ is invertible (over $\mathbb{F}_{13}$). 

Let $\tilde{\mathrm{W}}$ be a sequence of length $6$ obtained by randomly permuting the elements in $\mathrm{W}$, for example, $\tilde{\mathrm{W}}=\{10,4,8,2,7,5\}$, %, $\mathcal{K}\setminus \mathrm{W} = \{1,3,6,9,13,\dots,20\}$, 
and let $\tilde{\mathrm{V}}$ be a $3\times 6$ matrix obtained by applying the same permutation on the columns of the matrix $\mathrm{V}$, i.e., %the demand's coefficient matrix $\mathrm{V}$ is given by
\begin{equation*}
\tilde{\mathrm{V}} = 
\begin{bmatrix}
1 & 3 & 2 & 7 & 10 & 12\\
3 & 6 & 8 & 3 & 12 & 5\\
9 & 12 & 6 & 5 & 4 & 1\\
\end{bmatrix},
\end{equation*} 
%$n=\lfloor \frac{20}{6}\rfloor-1=2$
%$m\triangleq \frac{2}{2}+1=2$
%$t\triangleq \frac{6}{2}-1=2$
Note that $\mathrm{V}\mathrm{X}_{\mathrm{W}} = \tilde{\mathrm{V}}\mathrm{X}_{\tilde{\mathrm{W}}}$ by the construction of $\tilde{\mathrm{W}}$ and $\tilde{\mathrm{V}}$.

%%%%%%%%%%%%%%%%%%%%%%%%%%%%%%%%%%

%For this example, suppose that $\mathrm{W}=\{10,4,8,2,7,5\}$ is obtained by randomly permuting the elements in the demand's support index set, %, $\mathcal{K}\setminus \mathrm{W} = \{1,3,6,9,13,\dots,20\}$, 
%and %the demand's coefficient matrix $\mathrm{V}$ is given by
%\begin{equation*}
%\mathrm{V} = 
%\begin{bmatrix}
%1 & 3 & 2 & 7 & 10 & 12\\
%3 & 6 & 8 & 3 & 12 & 5\\
%9 & 12 & 6 & 5 & 4 & 1\\
%\end{bmatrix},
%\end{equation*} is obtained by applying the same permutation on the columns of the demand's coefficient matrix. 
%For this example, suppose that $\mathrm{W}=\{2,4,5,7,8,10\}$, %, $\mathcal{K}\setminus \mathrm{W} = \{1,3,6,9,13,\dots,20\}$, 
%and the demand's coefficient matrix $\mathrm{V}$ is given by
%\begin{equation*}
%\mathrm{V} = 
%\begin{bmatrix}
%1 & 3 & 2 & 7 & 10 & 12\\
%3 & 6 & 8 & 3 & 12 & 5 \\
%9 & 12 & 6 & 5 & 4 & 1 \\
%\end{bmatrix}.
%\end{equation*}
%Note that $\mathrm{V}$ is an MDS matrix. 

For this example, $R= K \pmod D =2$, $S=\gcd(D+R,R)=2$, $n=\lfloor \frac{K}{D}\rfloor-1=2$, and $m=\frac{R}{L}+1=\frac{5}{3}$. Note that for this example, $L=3>S=2$.

%For this example, let $K=20$, $L=3$, and $D=6$. Since $L=3$ and $S=2$, then we are in case (ii).  
%Consider  a  scenario  where  the  server has $K=20$ messages, $\mathrm{X}_1,\dots,\mathrm{X}_{20}\in\mathbb{F}_{13}$, and the user demands $L=3$ linear combination of messages $\mathrm{X}_2$, $\mathrm{X}_4$, $\mathrm{X}_5$, $\mathrm{X}_7$, $\mathrm{X}_8$, and $\mathrm{X}_{10}$, say $\mathrm{X}_2+3\mathrm{X}_4+2\mathrm{X}_5+7\mathrm{X}_7+10\mathrm{X}_8+12\mathrm{X}_{10}$, $3\mathrm{X}_2+6\mathrm{X}_4+8\mathrm{X}_5+3\mathrm{X}_7+12\mathrm{X}_8+8\mathrm{X}_{10}$, and $9\mathrm{X}_2+12\mathrm{X}_4+6\mathrm{X}_5+5\mathrm{X}_7+4\mathrm{X}_8+1\mathrm{X}_{10}$. Hence, $\mathrm{W}=\{2,4,5,7,8,10\}$, $\mathcal{K}\setminus \mathrm{W} = \{1,3,6,9,11,\dots,20\}$, and $\mathrm{V}$ is given by
%\begin{equation*}
%\mathrm{\mathrm{V}} = 
%\begin{bmatrix}
%1 & 3 & 2 & 7 & 10 & 12\\
%3 & 6 & 8 & 3 & 12 & 5\\
%9 & 12 & 6 & 5 & 4 & 1\\
%\end{bmatrix}.
%\end{equation*}
%$n=\lfloor \frac{20}{6}\rfloor-1=2$
%$m\triangleq \frac{2}{2}+1=2$
%$t\triangleq \frac{6}{2}-1=2$
%For this example, we have $n=2$, $m=\frac{R+L}{L}=\frac{5}{3}$, and $t=2$. 

The query of the user consists of a $11\times20$ matrix $\mathrm{G}$ and a permutation $\pi$, constructed as follows. 
The matrix $\mathrm{G}$ consists of three submatrices (blocks) $\mathrm{G}_1$, $\mathrm{G}_2$, and $\mathrm{G}_3$ of size $3\times6$, $3\times 6$, and $5\times 8$, respectively, i.e., 
\begin{equation}\label{eq:17}
\mathrm{G} = 
\begin{bmatrix}
\mathrm{G}_1 & 0_{3\times 6} & 0_{3\times 8} \\
0_{3\times 6} & \mathrm{G}_2 & 0_{3\times 8} \\
0_{3\times 6} & 0_{3\times 6} & \mathrm{G}_3
\end{bmatrix},
\end{equation} where the construction of the blocks $\mathrm{G}_1$, $\mathrm{G}_2$, and $\mathrm{G}_3$ is described below.
%as follows. 

The user randomly selects one of the blocks $\mathrm{G}_1, \mathrm{G_2},\mathrm{G}_3$, where the probability of selecting the block $\mathrm{G}_1$ (or the block $\mathrm{G}_2$) is $\frac{6}{20}$, and the probability of selecting the block $\mathrm{G}_3$ is $\frac{8}{20}$. 
Depending on $\mathrm{G}_1$, $\mathrm{G}_2$, or $\mathrm{G}_3$ being selected, the construction of each of these blocks is different. In this example, we consider the case that the user selects $\mathrm{G}_3$. In this case, the user takes $\mathrm{G}_1$ and $\mathrm{G}_2$ to be two randomly generated MDS matrices, each of size $3\times 6$, say, 
%The user constructs an $11\times20$ matrix $\mathrm{G}$, where $\mathrm{G}$ is constructed of two $3\times6$ matrices $\mathrm{G}_1$, $\mathrm{G}_2$, and one $5\times8$ matrix $\mathrm{G}_3$.
%as follows. 
%The user, randomly selects one of the matrices $\mathrm{G}_1$, $\mathrm{G}_2$, or $\mathrm{G}_3$, where the probability of picking each of matrices $\mathrm{G}_1$ and $\mathrm{G}_2$ is $\frac{6}{20}$ and the probability of choosing $\mathrm{G}_3$ is $\frac{8}{20}$. Let say that the user randomly chooses $\mathrm{G}_3$, i.e., $i^{*}=3$. 
%For $\mathrm{G}_1$, and $\mathrm{G}_2$ the user randomly generates two MDS matrix of size $3\times 6$, say   
\begin{equation}\label{eq:18}
\mathrm{G}_{1} = \begin{bmatrix}
    11 & 5 & 3 & 1 & 4 & 2 \\
    7 & 10 & 2 & 6 & 6 & 5\\
    8 & 7 & 10 & 10 & 9 & 6 \\
\end{bmatrix}, \quad \quad 
\mathrm{G}_{2} = \begin{bmatrix}
    5 & 8 & 4 & 7 & 4 & 3\\
    7 & 4 & 12 & 9 & 1 & 10\\
    2 & 2 & 10 & 6 & 10 & 3\\
\end{bmatrix}.
\end{equation}

The user constructs $\mathrm{G}_{3}$ using a similar technique as in the Specialized MDS Code protocol of~\cite{EHS2021Joint}. The details of the construction of $\mathrm{G}_3$ are as follows. Recall that $\tilde{\mathrm{V}}$ generates a $[6,3]$ MDS code. Thus, the user can obtain the parity-check matrix $\Lambda$ of the code generated by $\tilde{\mathrm{V}}$ as 
\begin{equation*}
\mathrm{\Lambda} = 
\begin{bmatrix}
12 & 11 & 3 & 2 & 5 & 11\\
10 & 9 & 12 & 12 & 6 & 10\\
4 & 5 & 9 & 7 & 2 & 2
\end{bmatrix}.
\end{equation*}
Note that $\Lambda$ also generates a $[6,3]$ MDS code. Then, the user randomly chooses a $D=6$-subset of $\{1,\dots,8\}$, say, $\{h_1,\dots,h_6\} = \{1,3,4,6,7,8\}$, and randomly generates a $3\times 8$ MDS matrix $\mathrm{H}$ such that the submatrix of $\mathrm{H}$ restricted to columns indexed by $\{h_1,\dots,h_6\}= \{1,3,4,6,7,8\}$ (and all rows) is $\Lambda$. For this example, suppose that the user constructs the matrix $\mathrm{H}$ as
\begin{equation*}
\mathrm{H} = 
\begin{bmatrix}
\mathbf{12} & 4 & \mathbf{11} & \mathbf{3} & 3 & \mathbf{2} & \mathbf{5} & \mathbf{11} \\
\mathbf{10} & 7 & \mathbf{9} & \mathbf{12} & 4 & \mathbf{12} & \mathbf{6} & \mathbf{10} \\
\mathbf{4} & 9 & \mathbf{5} & \mathbf{9} & 1 & \mathbf{7} & \mathbf{2} & \mathbf{2}
\end{bmatrix}.
\end{equation*} Since $\mathrm{H}$ generates an $[8,3]$ MDS code, it can also be thought of as the parity-check matrix of a $[8,5]$ MDS code. The user then takes $\mathrm{G}_3$ to be the generator matrix of the $[8,5]$ MDS code defined by the parity-check matrix $H$, 
\begin{equation}\label{eq:19}
\mathrm{G}_{3} = \begin{bmatrix}
    \mathbf{1} & 4 & \mathbf{5} & \mathbf{9} & 2 & \mathbf{8} & \mathbf{4} & \mathbf{11}\\
    \mathbf{3} & 7 & \mathbf{10} & \mathbf{10} & 7 & \mathbf{9 }& \mathbf{10} & \mathbf{10}\\
    \mathbf{9} & 9 & \mathbf{7} & \mathbf{1} & 5 & \mathbf{2} & \mathbf{12} & \mathbf{2}\\
    \mathbf{1} & 6 & \mathbf{1} & \mathbf{4} & 11 & \mathbf{12} & \mathbf{4} & \mathbf{3}\\
    \mathbf{3} & 4 & \mathbf{2} & \mathbf{3} & 6 & \mathbf{7} & \mathbf{10} & \mathbf{11}
\end{bmatrix}.
\end{equation} 
Combining $\mathrm{G}_1$, $\mathrm{G}_2$, and $\mathrm{G}_3$ given by~\eqref{eq:18} and~\eqref{eq:19}, the user then constructs $\mathrm{G} $ as in~\eqref{eq:17}.

Next, the user constructs a permutation $\pi$ on $\{1,\dots,20\}$, using the procedure described in the step 1 of the protocol, and sends the permutation $\pi$ together with the matrix $\mathrm{G}$ to the server. 
Note that the columns $13,15,16,18,19,20$ of $\mathrm{G}$ are constructed based on the columns of $\Lambda$, the columns of $\Lambda$ are constructed based on the columns of $\tilde{\mathrm{V}}$, and the columns of $\tilde{\mathrm{V}}$ correspond to the messages $X_{10},X_{4},X_8,X_2,X_7,X_{5}$, respectively. %$X_i$ for $i\in \mathrm{W} = \{2,4,5,7,8,10\}$, i.e., $X_2,X_4,X_5,X_7,X_8,X_{10}$. %that constitute the support set of the linear combinations $Z_1,Z_2,Z_3$. 
Thus, the user constructs the permutation $\pi$ such that %for any $i\in \mathrm{W}=\{2,4,5,7,8,10,11,12\}$, %the $\pi(i)$th component of $\tilde{\mathrm{X}}$ is the $i$th component of $\mathrm{X}$.  
%That is, 
$\{\pi(10),\pi(4),\pi(8),\pi(2),\pi(7),\pi(5)\}=\{13,15,16,18,19,20\}$. %, $\pi(7)=16$, $\pi(8)=17$, $\pi(10)=18$, $\pi(11)=19$, $\pi(12)=20$. 
For any $i\in \{1,\dots,20\}\setminus \{2,4,5,7,8,10\}$, the user then randomly chooses $\pi(i)$ (subject to the constraint that $\pi$ forms a valid permutation on $\{1,\dots,20\}$). For this example, suppose the user takes $\{\pi(1),\pi(3),\pi(6),\pi(9),\pi(11),\dots,\pi(20)\} = \{7,14,2,17,6,9,3,1,4,12,8,5,10,11\}$. 

Upon receiving the user's query, i.e., the matrix $\mathrm{G}$ and the permutation $\pi$, the server first permutes the components of the vector $\mathrm{X} = [X_1,\dots,X_{20}]^{\mathsf{T}}$ according to the permutation $\pi$ to obtain the vector $\tilde{\mathrm{X}}=\pi(\mathrm{X})$, i.e., $\tilde{X}_{\pi(i)} = X_i$ for $i\in \{1,\dots,20\}$. For this example, the vector $\tilde{\mathrm{X}}$ is given by
\[\tilde{\mathrm{X}} = [X_{14},X_6,X_{13},X_{15},X_{18},X_{11},X_{1},X_{17},X_{12},X_{19},X_{20},X_{16},X_{10},X_{3},X_{4},X_{8},X_{9},X_{2},X_{7},X_{5}]^{\mathsf{T}}.\]
%constructs $\tilde{\mathrm{X}}=\{3,13,8,14,15,1,16,17,2,18,19,20,9,11,5,4,12,6,10,7\}$
%from $\mathrm{X}$ according to the permutation $\pi$ . 
Then the server computes $\mathrm{y=\mathrm{G}\mathrm{\tilde{\mathrm{X}}}}$, and sends the vector $\mathrm{y}$ back to the user as the answer. 

Let $\mathrm{T}_1 = \{1,\dots,6\}$, $\mathrm{T}_2 = \{7,\dots,12\}$, and $\mathrm{T}_3 = \{13,\dots,20\}$. 
For any $T\subset \{1,\dots,20\}$, we denote by $\tilde{\mathrm{X}}_{\mathrm{T}}$ the vector $\tilde{\mathrm{X}}$ restricted to its components indexed by $\mathrm{T}$. 
Note that $\mathrm{y} = [\mathrm{y}_1^{\mathsf{T}},\mathrm{y}_2^{\mathsf{T}},\mathrm{y}_3^{\mathsf{T}}]^{\mathsf{T}}$, where $\mathrm{y}_1 \triangleq \mathrm{G}_1 \tilde{\mathrm{X}}_{\mathrm{T}_1}$, $\mathrm{y}_2 \triangleq \mathrm{G}_2 \tilde{\mathrm{X}}_{\mathrm{T}_2}$, and $\mathrm{y}_3 \triangleq \mathrm{G}_3 \tilde{\mathrm{X}}_{\mathrm{T}_3}$. 
%Let $\mathrm{I}$ be the identity matrix of size $3\times 3$. 
Then, the user recovers $[Z_1,Z_2,Z_3]^{\mathsf{T}} = \mathrm{V}\mathrm{X}_{\mathrm{W}} = \tilde{\mathrm{V}}\mathrm{X}_{\tilde{\mathrm{W}}}$ by computing
\begin{align*}
& \begin{bmatrix}
11 & 11 & 1 & 0 & 0\\
0 & 11 & 11 & 1 & 0\\
0 & 0 & 11 & 11 & 1	
\end{bmatrix}\mathrm{y}_3 
= 
\begin{bmatrix}
11 & 11 & 1 & 0 & 0\\
0 & 11 & 11 & 1 & 0\\
0 & 0 & 11 & 11 & 1	
\end{bmatrix} \mathrm{G}_3 \tilde{\mathrm{X}}_{\mathrm{T}_3}\\
& \quad =  
\begin{bmatrix}
11 & 11 & 1 & 0 & 0\\
0 & 11 & 11 & 1 & 0\\
0 & 0 & 11 & 11 & 1	
\end{bmatrix} 
\begin{bmatrix}
    \mathbf{1} & 4 & \mathbf{5} & \mathbf{9} & 2 & \mathbf{8} & \mathbf{4} & \mathbf{11}\\
    \mathbf{3} & 7 & \mathbf{10} & \mathbf{10} & 7 & \mathbf{9 }& \mathbf{10} & \mathbf{10}\\
    \mathbf{9} & 9 & \mathbf{7} & \mathbf{1} & 5 & \mathbf{2} & \mathbf{12} & \mathbf{2}\\
    \mathbf{1} & 6 & \mathbf{1} & \mathbf{4} & 11 & \mathbf{12} & \mathbf{4} & \mathbf{3}\\
    \mathbf{3} & 4 & \mathbf{2} & \mathbf{3} & 6 & \mathbf{7} & \mathbf{10} & \mathbf{11}
\end{bmatrix} \tilde{\mathrm{X}}_{\mathrm{T}_3}\\
& \quad = 
\begin{bmatrix}
\mathbf{1} & 0 & \mathbf{3} & \mathbf{2} & 0 & \mathbf{7} & \mathbf{10} & \mathbf{12}\\
\mathbf{3} & 0 & \mathbf{6} & \mathbf{8} & 0 & \mathbf{3} & \mathbf{12} & \mathbf{5}\\
\mathbf{9} & 0 & \mathbf{12} & \mathbf{6} & 0 & \mathbf{5} & \mathbf{4} & \mathbf{1}\\
\end{bmatrix} \begin{bmatrix} X_{10} \\ X_{3}\\ X_{4}\\ X_{8}\\ X_{9}\\ X_{2}\\ X_{7} \\ X_{5}\end{bmatrix}
= 
\begin{bmatrix}
1 & 3 & 2 & 7 & 10 & 12\\
3 & 6 & 8 & 3 & 12 & 5\\
9 & 12 & 6 & 5 & 4 & 1
\end{bmatrix}  \begin{bmatrix}X_{10} \\ X_{4}\\ X_{8}\\ X_{2}\\ X_{7} \\ X_{5}\end{bmatrix}  = \tilde{\mathrm{V}} \mathrm{X}_{\tilde{\mathrm{W}}}.
\end{align*}

\end{example}

%%%%%%%%%%%%%%%%%%%%%%%%%%%%%%%%%%%%%%%%%%%%%%%%%%
\newpage

\bibliographystyle{IEEEtran}
\bibliography{PIR_PC_Refs}

\end{document}